\newcommand{\semigeq}{\succeq}
\renewcommand\footnotemark{}
\newtheorem{theorem}{Theorem}
\newtheorem*{theorem*}{Theorem}
\newtheorem{proposition}[theorem]{Proposition}
\newtheorem{lemma}[theorem]{Lemma}
\newtheorem{corollary}[theorem]{Corollary}
\newtheorem{definition}[theorem]{Definition}
\newcommand{\cqvc}{EVC}
\newcommand{\ketbra}[2]{|#1\rangle \langle #2|}
\newcommand{\tr}{\mathrm{Tr}}
\newcommand{\knote}[1]{\textcolor{red}{({\bf Kevin:} #1)}}
\newcommand{\cnote}[1]{\textcolor{blue}{({\bf Chaithanya:} #1)}}
\newcommand{\onote}[1]{\textcolor{orange}{(Ojas: #1)}}
\renewcommand{\knote}[1]{}
\renewcommand{\cnote}[1]{}
\renewcommand{\onote}[1]{}
\title{Constrained local Hamiltonians: quantum generalizations of Vertex Cover}
\author[1]{Ojas Parekh\thanks{\text{\{chaithanyarss\}@unm.edu, \{kevthom, odparek\}@sandia.gov}}}
\author[2]{Chaithanya Rayudu}
\author[1]{Kevin Thompson} 
\affil[1]{Quantum Algorithms and Applications Collaboratory, Sandia National Laboratories, Albuquerque, NM, USA}
\affil[2]{Department of Physics and Astronomy and Center for Quantum Information and Control, University of New Mexico, Albuquerque, NM, USA}
\date{}
\begin{document}

\maketitle

\begin{abstract}
Recent successes in producing rigorous approximation algorithms for local Hamiltonian problems such as Quantum Max Cut have exploited connections to unconstrained classical discrete optimization problems.  We initiate the study of approximation algorithms for \emph{constrained} local Hamiltonian problems, using the well-studied classical Vertex Cover problem as inspiration. We consider natural quantum generalizations of Vertex Cover, and one of them, called \emph{Transverse Vertex Cover} (TVC), is equivalent to the PXP model with additional 1-local Pauli-$Z$ terms.
We show TVC is StoqMA-hard and develop an approximation algorithm for it based on a quantum generalization of the classical local ratio method.  This results in a simple linear-time classical approximation algorithm that does not depend on solving a convex relaxation. We also demonstrate our quantum local ratio method on a traditional unconstrained quantum local Hamiltonian version of Vertex Cover which is equivalent to the anti-ferromagnetic transverse field Ising model.
\end{abstract}

\newpage
\tableofcontents
\newpage

\section{Introduction}

The $k$-Local Hamiltonian (LH) problem, the problem of finding ground energy of $k$-local Hamiltonians, forges a fundamental connection between physical models of matter and the theory of quantum complexity and algorithms. LH is a natural quantum generalization of classical constraint satisfaction problems (CSPs)~\cite{kempe2006complexity}. Casting CSPs as discrete optimization problems has inspired rich discoveries, notably for approximation algorithms and hardness of approximation~\cite{haastad2001some, khot2007optimal}.  However, while CSPs are quintessential unconstrained discrete optimization problems, the vast majority of discrete optimization focuses on \emph{constrained} problems.  While these constrained optimization problems in NP reduce to unconstrained NP-complete problems such as MaxCut, such reductions do not generally preserve approximability.  An emerging body of work has established techniques for generalizing approximation algorithms for CSPs to the quantum local Hamiltonian setting.  Yet, quantum counterparts of constrained discrete optimization problems have received virtually no attention~\cite{gharibian2022improved, gharibian2022quantum}. 

Our primary motivation is a better understanding of the approximation and hardness of local Hamiltonian problems, including introducing new frameworks for designing approximation algorithms for local Hamiltonian problems.  While the resolution of the quantum PCP conjecture remains elusive, a better understanding of it might be achieved via QMA-hard instances of local Hamiltonians for which a tight approximability result can be established, with respect to say NP-hardness or Unique-Games-hardness rather than QMA-hardness.  Quantum Max Cut is such a candidate; however, closing the approximability gap remains a challenge, with the currently best-known approximation ratio of 0.595..~\cite{lee2024improved}, and a conjectured Unique-Games-hardness threshold of 0.956... ~\cite{hwang2023unique}.  

Looking at classical problems for inspiration, one of the earliest examples of a tight approximation result based on the Unique Games conjecture is a 2-approximation for Vertex Cover~\cite{khot2008vertex}.  Given a graph, Vertex Cover seeks to find a minimum-size (or weight) set, $C$ of vertices so that each edge contains at least one endpoint in $C$.  Vertex Cover is a prime example of a simple covering problem, in much the same way that Max Cut is an archetypical CSP, and the problem and its generalizations have been widely studied~\cite{Has06, PANDEY2018121}.  Vertex Cover offers attractive features as a model for understanding approximability: (i) several simple optimal approximation algorithms are known for it, some of which are elementary linear-time algorithms that do not require solving semidefinite or linear programs, and (ii) optimal hardness results are easier to prove than for alternatives such as Max Cut~\cite{khot2008vertex, khot2007optimal}.  Given the relative immaturity of quantum approximation algorithms, a natural question is whether quantum generalizations of Vertex Cover might retain these features.

\subsection{Contributions}
In this work, we initiate the study of constrained quantum local Hamiltonian problems from an approximation algorithms perspective. Our broader goal is to understand how interesting constrained local Hamiltonian problems can arise from well-studied classical constrained discrete optimization problems.  The latter has enjoyed a rich history spanning decades and inspired a variety of novel algorithmic techniques.  We hope for the  types of connections we introduce to open doors to new kinds of quantum algorithms inspired by these classical techniques.  While is not clear that quantum local Hamiltonian problems obtained in this way should be intrinsically interesting from a physical perspective, our approach does yield generalizations of local Hamiltonian problems that have been independently studied by physicists.  

We introduce natural quantum generalizations of Vertex Cover and prove complexity and approximation results. Broadly, we suggest different methods for obtaining quantum generalizations of classical discrete optimization problems and consider quantum generalizations of Vertex Cover with respect to each.  
One of the generalizations is {\it Transverse Vertex Cover} (TVC) obtained by generalizing the objective function of Vertex Cover. 
The decision version of TVC lies in the complexity class Stoquastic MA (StoqMA), and we show that the optimization problem is StoqMA-hard. See \Cref{sec:stoq_VC} for the formal definition of TVC and proof of hardness.
We observe in \Cref{sec:PXP} that TVC generalizes the PXP model, which has been recently introduced in the study of Rydberg atom arrays.
TVC is also related to the {\it Transverse Ising model} (TIM) with certain restrictions on $ZZ$ and $Z$ terms through which Vertex Cover constraints can be enforced by making their strengths arbitrarily large. While TIM is known to be StoqMA-hard \cite{bravyi2017complexity} in general, our result shows the hardness for this more restricted case where some terms are necessarily ``infinitely strong''. 

We further generalize the constraints of Vertex Cover to obtain a problem we refer to as {\it Entangled Vertex Cover} (EVC). Somewhat counter-intuitively EVC is polynomial-time solvable on a classical computer in all cases which are not equivalent to TVC instances. Thus EVC is StoqMA-hard in the worst case. See \Cref{sec:entangled_VC} for more details on this. We also consider a stoquastic generalization of the Prize Collecting Vertex Cover problem, that is equivalent to the anti-ferromagnetic transverse Ising model.

We give a $(2+\sqrt{2})$-approximation algorithm for TVC using a generalization of the classical local ratio (LR) method (also known as the primal-dual schema).  A hallmark of LR is that while our approximation guarantee is with respect to a convex relaxation, we do not have to solve the relaxation. Our resulting algorithm is a simple deterministic linear-time classical approximation algorithm.  To the best of our knowledge, ours is the first application of LR to a quantum local Hamiltonian problem.  Classically, local-ratio or primal-dual methods have produced a range of approximation algorithms for diverse problems~\cite{bar2004local}.  We include a brief primer for LR as well as an explanation of how our quantum generalization arises from it.  We expect our quantum LR will inspire new quantum approximation algorithms, especially since currently known quantum approximation algorithms rely on limited techniques such as rounding semidefinite programming hierarchies~\cite{parekh2022optimal, parekh2021application}. Finally, we also demonstrate an LR-based 4.1938..-approximation algorithm for the unconstrained stoquastic Hamiltonian version of Vertex Cover which is equivalent to the anti-ferromagnetic transverse field Ising model after rearranging the local terms to make them positive semi-definite in a specific way.

\cnote{--new--}
Vertex Cover is a special case of TVC so we obtain a $2-\epsilon$ hardness of approximation under the Unique Games Conjecture for TVC, hence TVC is NP-hard to approximate to within a factor $2-\epsilon$ under the unique games conjecture~\cite{khot2008vertex}. Quantum analogs of the PCP theorem for StoqMA \cite{aharonov2019stoquastic} and QMA remain elusive, and it is possible that the study of hardness for approximating StoqMA problems will shed light on the hardness of approximating QMA-hard problems. We hope our work inspires broader efforts in porting techniques from classical approximation algorithms to quantum settings as well.

\subsection{Related work}

Generalizing classical constraint satisfaction problems to quantum setting has been previously done with regard to various classical problems. \cite{kitaev2002classical} showed that the local Hamiltonian problem is complete for the class QMA with later works improving upon the locality of the interaction terms in the Hamiltonian. In \cite{bravyi2011efficient}, Bravyi showed that the quantum 2-SAT problem, which is a generalization of the classical 2-SAT problem, is in P giving an efficient algorithm. We make use of some of the results from \cite{bravyi2011efficient} in our work in section \ref{sec:entangled_VC} to show that two orthogonality constraints with entangling states between pairs of qubits with a common qubit can be used to generate new constraints. 
The result for EVC we present is implicit in prior works ~\cite{bravyi2014bounds, laumann2009phase, aldi2021efficiently, de2010ground, ji2011complete} since it has already been observed that the feasible region of the problem we define is only polynomially large~\cite{de2010ground, ji2011complete} and that local observables can be efficiently calculated inside the feasible region ~\cite{de2010ground}.  However, specializing these known ideas to our context provides some interesting observations so we give a proof.
\subsection{Notation}


The Pauli matrices are defined as:
\begin{equation*}
\label{eq:paulis}
 \mathbb{I}=\begin{bmatrix}
1 & 0 \\
0 & 1
\end{bmatrix},
\,\,\,\,\,\,
X=\begin{bmatrix}
0 & 1 \\
1 & 0
\end{bmatrix},
\,\,\,\,\,\,
Y=\begin{bmatrix}
0 & -i \\
i & 0
\end{bmatrix}, \,\text{and}
\,\,\,\,\,\,
Z=\begin{bmatrix}
1 & 0 \\
0 & -1
\end{bmatrix}.
\end{equation*}
\noindent Subscripts indicate quantum subsystems among $n$ qubits.  For instance, the notation $\sigma_i$ is used to denote a Pauli matrix $\sigma \in \{X,Y,Z\}$ acting on qubit $i$, i.e., $\sigma_i := \mathbb{I} \otimes \mathbb{I} \otimes \ldots \otimes \sigma \otimes \ldots \otimes \mathbb{I} \in \mathbb{C}^{2^n \times 2^n}$, where the $\sigma$ occurs at position $i$.  We will say a Pauli operator on $n$ qubits is $t$-local if it is the product of $n$ Pauli matrices, at most of which $t$ are not equal to $\mathbb{I}$.  We will denote the $2-$qubit unitary which ``swaps'' the state between qubits as 
\begin{equation}
    SWAP=\begin{bmatrix}
        1 & 0 & 0 & 0\\
        0 & 0 & 1 & 0\\
        0 & 1 & 0 & 0\\
        0 & 0 & 0 & 1
    \end{bmatrix}
\end{equation}
If $G=(V, E)$ is a graph then an edge $(i, j)\in E$ will simply be denoted $ij\in E$.  A path in the graph will be denoted as $(v_1, v_2, ..., v_p)$ where $v_iv_{i+1} \in E$ for all $i$. Rest of the notation we use in this paper is either standard notation or should be clear from the context.

\section{Overview}

\subsection{Quantum versions of classical problems}

Our notion of a constrained local Hamiltonian problem is:
\begin{definition}[Constrained local Hamiltonian]
\label{def:constrained-LH}
For a $k$-local Hamiltonian $H = \sum_{S \subseteq V:|S| \leq k} H_S$, let
\cnote{changed from $\tr[C_l\rho]=b_l\rightarrow \tr[C_l\rho]=0$ since $b_l$ can be pushed into $C_l$}
\begin{align}
\mu^*(H, \{C_l\}) = \min\ &\tr[H\rho]\\
s.t.\qquad & \tr[C_l\rho]=0\quad \forall l, \label{def:constrained-LH:constraint} \\
& \tr[\rho] = 1,\\
& \rho \succeq 0.
\end{align}
\end{definition}
The Hamiltonian $H$ above is the \emph{objective}, and the $C_l$ are Hermitian \emph{constraint} operators that are assumed to have efficient classical descriptions.  While the $C_l$ need not be local or a projector, we expect that they will be for natural problems. The number of constraints will also generally be polynomial in the number of qubits. If each term $H_S$ of the objective and the constraints are diagonal in the computational basis, then the problem is a constrained classical constraint satisfaction problem (CSP).  In this case, the objective terms and constraints may be expressed as tensor products of $\mathbb{I}$ and $Z$ operators. Many interesting combinatorial problems can be framed with a distinctive feature that the ground space of the objective Hamiltonian $H$ is easy to find in the full Hilbert space, and it is easy to find states that satisfy the constraints $C_l$; however, computational hardness comes from the tension between the constraints and objective. Vertex cover provides a simple example, since for this problem $H$ is a $1$-local Hamiltonian and $\{C_l\}$ are frustration-free diagonal operators representing a monotonic feasible space (i.e., if a computational basis $\ket{x}$ is feasible, then any $\ket{x'}$ that flips a 0 in $x$ to 1 is also feasible).   

We describe three systematic approaches for obtaining (constrained) local Hamiltonian problems from (constrained) constraint satisfaction problems (CSPs). Start with a constrained classical CSP as described above. By lifting the restriction that each local objective term and the constraints be diagonal while retaining other problem-specific properties, we obtain a quantum local Hamiltonian problem from the CSP.  An example is the Quantum SAT problem~\cite{bravyi2011efficient}.  Classical SAT is captured by $k$-local terms that are each diagonal projectors of rank $2^k-1$ when restricted to the space of the $k$ qubits on which the term acts.  By considering general \emph{non-diagonal} projectors with the same rank condition, one obtains Quantum SAT.  We derive a quantum analogue of Vertex Cover, called Entangled Vertex Cover in this way.

For our second strategy, we add a transverse field term (i.e., $\sum_i w_i X_i$) to the objective of the classical CSP. For example, the classical Ising model gives rise to the transverse Ising model in this way.  This is how we derive the Transverse Vertex Cover problem from the Vertex Cover problem.  This strategy can be used to come up with new StoqMA-complete problems starting from known NP-complete problems.

For the final approach, we can consider the way in which the classical constrained CSP acts in the $Z$ basis and lift this to a quantum problem that acts analogously in the $X$ and $Y$ bases.  For example, the classical Ising model has $Z_i Z_j$ terms while the quantum Heisenberg model has $X_i X_j + Y_i Y_j + Z_i Z_j$ terms.  This is also exactly how the Max Cut and Quantum Max Cut~\cite{gharibian2019almost} problems are related. One way to formalize this procedure is to start with a natural SDP relaxation of the constrained CSP, which may be viewed as a vector program where there is a unit vector $v_i$ for each variable and the objective and constraints are functions of inner products of the vectors. We restrict the $v_i$ to be rank $3$ and then let each of the three components of $v_i$ correspond to the Paulis $X_i$, $Y_i$, and $Z_i$.  Now, the inner product $v_i \cdot v_j$ correspond to $X_i X_j + Y_i Y_j + Z_i Z_j$.  A unit vector $v_0$ corresponds to $\mathbb{I}$ so that $v_0 \cdot v_i$ corresponds to $\alpha X_i + \beta Y_i + \gamma Z_i$, for some choice of $\alpha,\beta,\gamma$ with $\alpha^2 + \beta^2 + \gamma^2 = 1$.  The SDP relaxation of the constrained CSP then corresponds to a constrained local Hamiltonian problem. 

We might generally expect NP-complete classical problems to give rise to StoqMA- or QMA-complete quantum local Hamiltonian problems using the strategies above. However, this is not always the case, as we see with Entangled Vertex Cover. Entangled, ``quantum'', constraints can be restrictive enough to force the feasible region to be only polynomially large, making a classically intractable problem easier in the quantum case.  For this reason, it is also interesting to study quantum generalizations where constraints are relaxed to soft constraints by adding them into the Hamiltonian with a penalty, as is done classically through ``prize collecting'' variants of problems. \cnote{--new--}We define the Prize Collecting variation of \Cref{def:constrained-LH} as follows.

\begin{definition}[Prize Collecting Constrained local Hamiltonian]
    \label{def:PC_constrained-LH}
    For a $k$-local Hamiltonian $H = \sum_{S \subseteq V:|S| \leq k} H_S$, with constraints $C_l$ and corresponding penalties $w_l \geq 0$, let
    \begin{align}
    \mu^*(H,\{C_l, w_l\}) = \min\ &\tr[H \rho] +\sum_l  w_l \tr[C_l \rho]\\
    s.t.\qquad & \tr[\rho] = 1,\\
    & \rho \succeq 0.
    \end{align}
\end{definition}

At a glance, this definition may seem no different than the local Hamiltonian problem but this perspective can help in coming up with new QMA-hard problems. For example, consider the ferromagnetic Heisenberg Hamiltonian whose ground state is trivial along with 1-local Hamiltonians that are easy to optimize on their own but when put together can create QMA-hardness \cite{schuch2009computational}.  In the classical case, the penalties $w_l$ can typically be set to polynomially large values to enforce constraints $C_l$; however, in the quantum case this is no longer true.

\subsection{Complexity of Vertex Cover generalizations}
Our first set of results concerns the complexity of a ``transverse'' generalization of the weighted Vertex Cover problem.  Given a graph $G=(V, E)$ and a set of weights $\{c_i: c_i\geq 0\,\, \forall \,\,i\in V\}$, weighted Vertex Cover is the problem of choosing a minimum weight set $S\subseteq V$ such that each edge $ij\in E$ has $i\in S$ or $j\in S$.  Vertex cover can be formulated as an optimization problem over classical (diagonal) density matrices.  For $x\in \mathbb{F}_2^{|V|}$ a computational basis state on $|V|$ qubits $\ket{x}$ corresponds to a subset of the vertices according to $\ket{x}\leftrightarrow S=\{i\in V: \,\,x_i=1\}$.  Demanding $ij\in E$ to be covered is the same as demanding that $ x_i$ or $x_j$ is equal to $1$, i.e., anything except $(x_i,x_j) = (0,0)$, and the problem is to minimize the value of a diagonal $1$-local observable over density matrices satisfying the constraints.
We define and initiate the study of several natural generalizations of this problem to non-diagonal objectives/constraints.

\paragraph{Tranverse Vertex Cover} We obtain the first quantum generalization by allowing non-diagonal $1$-local projectors $\phi_i$ for the objective such that Tr$[\phi_i Z_i] \leq 0$ while having the constraints unchanged.
We will refer to this as the {\it Transverse Vertex Cover} since it corresponds to adding Pauli $X$ terms just as in the Transverse Ising model.
\begin{restatable}[Transverse Vertex Cover]{definition}{tvc}
    \label{def:TVC}
    Given a vertex covering constraint graph $G(V,E)$ and a stoquastic 1-local Hamiltonian $H = \sum_{i \in V} c_i \phi_i$ such that $c_i \geq 0$, $\phi_i$ are 1-local projectors acting on $i^{th}$ qubit such that $\mathrm{Tr}[Z_i \phi_i] \leq 0 \ \forall i \in V$, Transverse Vertex Cover optimization problem is defined as  
    \begin{align}\label{eq:stoq_vc}
        &\min \mathrm{Tr}[H \rho]\\
        \label{eq:tvc_const}s.t.\,\,\,\,\,\,\,\,\,\,\, &\mathrm{Tr}[\ket{00}\bra{00}_{ij}\rho]=0 \,\,\,\,\, \forall ij\in E,\\
         &\mathrm{Tr}(\rho)=1,\\
         &\rho \semigeq 0.
    \end{align}
\end{restatable}

We use perturbative gadgets to show that estimating the output of the Transverse Vertex Cover problem is StoqMA-hard by a reduction from estimating the ground state energy of the Transverse Ising model to within a given additive precision that is inverse polynomial in the system size, which is known to be StoqMA-hard \cite{bravyi2017complexity}.

\begin{restatable}{theorem}{thmtvchardness}
    \label{thm:TVC_hardness}%
    Transverse Vertex Cover problem is StoqMA-hard.
\end{restatable}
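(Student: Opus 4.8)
The plan is to reduce from the problem of estimating, to inverse-polynomial additive precision, the ground-state energy of the transverse Ising model (TIM) $H_{\mathrm{TIM}} = \sum_{ij} J_{ij} Z_i Z_j + \sum_i h_i Z_i - \sum_i \Gamma_i X_i$ with $\Gamma_i \ge 0$, which is StoqMA-hard by~\cite{bravyi2017complexity}. Given such an instance on $n$ qubits, I would construct in polynomial time a TVC instance on $n + \mathrm{poly}(n)$ qubits whose optimum equals the TIM ground energy up to a fixed affine rescaling and an error driveable below any inverse polynomial. Reading off the TVC optimum then solves the TIM energy problem, establishing hardness; note that only this direction is needed, as the theorem claims hardness rather than completeness.

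First I would observe that the transverse field of $H_{\mathrm{TIM}}$ maps directly onto the TVC objective: a 1-local stoquastic projector $\phi_i = \ket{-}\bra{-}_i = \tfrac12(\mathbb{I} - X_i)$ contributes $-\tfrac12 X_i$ (up to a constant), which is exactly the stoquastic transverse field and automatically satisfies $\mathrm{Tr}[Z_i \phi_i] = 0 \le 0$; tilting $\phi_i$ toward $\ket{1}$ supplies, in addition, a local $Z_i$ field of the sign permitted by $\mathrm{Tr}[Z_i\phi_i]\le 0$. The nontrivial content is therefore to realize the diagonal part $\sum_{ij}J_{ij}Z_iZ_j + \sum_i h_i Z_i$, for which the only primitive directly available is the hard vertex-cover constraint, since $\ket{00}\bra{00}_{ij} = \tfrac14(\mathbb{I} + Z_i + Z_j + Z_iZ_j)$ is a fixed, infinitely strong antiferromagnetic $ZZ$-plus-field combination.

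To obtain finite, sign-tunable $ZZ$ couplings from infinitely strong constraints I would use a perturbative gadget analyzed by a Schrieffer--Wolff / effective-Hamiltonian expansion. For each desired interaction I introduce mediator qubits carrying vertex-cover constraints of weight $\Delta$, together with 1-local $\phi$-terms whose coefficients scale as a suitable power of $\Delta$, so that the first-order projection reproduces the transverse fields while the second-order term $-P_0\,V\,(H_0-E)^{-1}_{\perp}\,V\,P_0$ reproduces the target $J_{ij}Z_iZ_j$ and $h_iZ_i$ with error $O(\mathrm{poly}(n)/\Delta)$. Crucially, because coupling and gap are scaled together, the effective interactions remain finite in the limit $\Delta\to\infty$, which is exactly the hard-constraint regime $\mathrm{Tr}[\ket{00}\bra{00}_{ij}\rho]=0$ of TVC; this is the sense in which the vertex-cover constraints are ``enforced by making their strengths arbitrarily large.'' I would then invoke the standard gadget lemma to certify that the low-energy subspace of the penalized Hamiltonian is $\epsilon$-close in spectrum to $H_{\mathrm{TIM}}$ for $\Delta = \mathrm{poly}(n/\epsilon)$, and that the mediators introduce no spurious feasible low-energy states.

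The main obstacle is the gadget design under the peculiar sign and form restrictions of TVC: the available perturbation is a stoquastic, sign-constrained 1-local field, the only quadratic primitive is the fixed antiferromagnetic $\ket{00}\bra{00}$ penalty, and the dominant second-order corrections have a definite sign, so producing both ferromagnetic and antiferromagnetic $J_{ij}$ (and local fields of either sign) while keeping every term stoquastic and every $\phi_i$ compliant with $\mathrm{Tr}[Z_i\phi_i]\le 0$ will require careful choice of mediator configurations and compensating constant shifts. Verifying that the perturbative error stays inverse-polynomial uniformly over the feasible subspace, and that passing to the exact hard ($=0$) constraint commutes with the effective-Hamiltonian approximation, is the delicate technical core; the remainder is bookkeeping of the affine energy shift and polynomial accounting of the added qubits and terms.
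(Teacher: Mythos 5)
Your high-level plan—reduce from the Bravyi--Hastings StoqMA-hard TIM problem via perturbative gadgets—is the same as the paper's, but the construction you sketch has a conceptual flaw at its core. In TVC the covering constraints are exactly hard: they are not objective terms and carry no weight, so there is no family of TVC instances in which they appear ``with weight $\Delta$'' to be scaled jointly with couplings, and your framing of taking a soft-penalty limit $\Delta \to \infty$ does not correspond to any legitimate instance family. The only tunable large parameter lives in the $1$-local objective, and all perturbation theory must therefore be carried out strictly inside the feasible subspace (the paper flags this as the key departure from standard gadget reductions, and checks that every intermediate state of its Bloch expansion remains a covering). Inside that subspace a $1$-local $X_i$ is not $X_i$ but the neighbor-dressed operator $\Pi_i X_i$ (PXP-like): flipping qubit $i$ from $\ket{1}$ to $\ket{0}$ is feasible only when every constraint-neighbor—including any mediator you attach—is in $\ket{1}$. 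This already breaks your claim that $\phi_i = \ketbra{-}{-}_i$ realizes the transverse field ``directly'': once mediators hang off qubit $i$ for the $ZZ$ gadgets, the effective field on $i$ is modulated by the gadget's own state.

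The genuinely missing idea is a mechanism for sign-tunable $Z_iZ_j$, which you name as an obstacle but do not resolve. A second-order mediator gadget analyzed inside the feasible subspace yields terms of the form $-\lambda\,\ketbra{11}{11}_{ij}$ (the mediator can only relax to $\ket{0}$ when $i=j=1$), i.e.\ a single sign of $Z_iZ_j$ plus unwanted local fields; and you cannot repair the sign by conjugating a qubit with $X$, since that maps $\ketbra{00}{00}_{ij}$ to $\ketbra{10}{10}_{ij}$, which is no longer a vertex-cover constraint—the sign restriction is intrinsic to TVC. The paper overcomes exactly this with a dual-rail encoding: each TIM qubit becomes a pair $i,\tilde{i}$ joined by a constraint edge, with strong $1$-local $Z$ terms pinning the logical space to $\{\ket{01},\ket{10}\}$, plus four mediator qubits per interaction, so that $w_{ij}\overline{Z}_i\overline{Z}_j$ of \emph{either} sign arises already at first order from $1$-local $Z$ perturbations with a $\pm$ sign pattern (folded into the large background so each physical term stays compliant with $c_i \geq 0$, $\mathrm{Tr}[Z_i\phi_i]\leq 0$); the price is that the logical transverse field must then be engineered at eighth order (sixth or fourth for degree-$2$ or degree-$1$ qubits), with the attendant combinatorial counting and classically computable energy-shift corrections. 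Without an encoding of this kind your reduction cannot realize general TIM couplings, so the proposal as written does not go through.
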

In contrast to other reductions using perturbative gadgets \cite{cubitt2016complexity}, an important detail in our context is to ensure that the reduction is done strictly inside the subspace defined by the constraints.  We prove that the \cref{thm:TVC_hardness} is true even when we restrict the vertex cover constraint graph to have a maximum degree of 3. We also define a ``Prize Collecting'' version of Transverse Vertex Cover where edges can be uncovered while incurring additional cost in the objective.  This is easily seen to be equivalent to the anti-ferromagnetic transverse Ising model, hence is also StoqMA-hard.

\paragraph{Entangled Vertex Cover} The next generalization is to allow constraints that are not diagonal in the computational basis. Since the constraints for this problem involve projecting onto generic (possibly entangled) $2$-local projectors, we will refer to this as {\it Entangled Vertex Cover} (EVC).  The constraints for EVC generalize classical Vertex Cover, and instances of EVC with diagonal objective and constraints include Vertex Cover.  More precisely, EVC is defined by a 2-qubit projector $C$ that is (i) SWAP invariant and (ii) rank $1$. The constraint $\tr[C_{ij} \rho] = 0$, where $C_{ij}$ is $C$ acting on qubits $i$ and $j$, is imposed on each edge $ij$ of an input graph $G$.  The SWAP invariance of the constraints is tantamount to assuming that constraints treat endpoints of edges symmetrically.  While this is a generalization of Transverse Vertex Cover and hence also StoqMA-hard, known results~\cite{ji2011complete, de2010ground, bravyi2011efficient} imply that instances of the problem with ``entangled'' constraints are polynomial-time solvable in P.  Essentially an instances of this more general problem that does not differ from a Transverse Vertex Cover instance, up to local unitaries, corresponds to computing an extremal eigenvalue of a polynomially large matrix.  For completeness we give a proof of this fact by using the transfer matrix approach of Bravyi~\cite{bravyi2011efficient}.  We employ this method to derive new constraints from entangled constraints given in the problem definition.  We find that for a connected constraint graph the entangled constraints are strong enough to force the feasible region to depend \textbf{only} on the form of the constraint and the bipartiteness of the constraint graph. \cnote{--new--} \onote{The singlet example was not SWAP invariant, so I changed it.} An example of such a constraint would be a projector onto a SWAP-invariant Bell state between pairs of qubits $ij$ on a graph, e.g. $\tr[(I+X_iX_j+Y_iY_j-Z_iZ_j)\rho] = 0$. 

\onote{We need to reconcile this with the above, since EVC constraints are SWAP symmetric; I used ``a variant of this problem'' to indicate this below.}
While the `Entangled' constraints make the problem easy, when these constraints are relaxed to soft constraints following the \Cref{def:PC_constrained-LH}, a variant of this problem can be made QMA-hard.
\begin{theorem*}[\cite{schuch2009computational}]
Let $H$ be the ferromagnetic Heisenberg Hamiltonian with 1-local terms as follows
\begin{align}
    H = \sum_{ij} w_{ij} (I-X_iX_j-Y_iY_j-Z_iZ_j) + 1\text{-local terms}
\end{align}
where $w_{ij} \geq 0 \,\forall ij$. Estimating the ground state energy of $H$ to within a given inverse polynomial in the number of qubits additive precision is QMA-hard.
\end{theorem*}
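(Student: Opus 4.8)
The statement only claims QMA-hardness; containment in QMA is immediate, since $H$ is a $2$-local Hamiltonian and the general local Hamiltonian problem lies in QMA~\cite{kitaev2002classical}. For the hardness direction the natural route is a reduction from the $2$-local Hamiltonian problem, which is QMA-complete~\cite{kitaev2002classical}. The difficulty is that the two-body interactions are rigidly constrained: every bond contributes exactly $w_{ij}(I - X_iX_j - Y_iY_j - Z_iZ_j)$ with $w_{ij}\ge 0$, and this operator equals $2(I-\mathrm{SWAP}_{ij}) = 4P^-_{ij}$, four times the projector onto the two-qubit singlet. Consequently the pure two-body part is both trivially minimized by any fully symmetric (ferromagnetically aligned) state and \emph{stoquastic} in the computational basis, since $-(X_iX_j+Y_iY_j)$ has nonpositive off-diagonal entries. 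All of the hardness must therefore be manufactured from the freely chosen, and necessarily non-stoquastic, $1$-local terms; a purely stoquastic choice would place $H$ in StoqMA.

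I would carry out the reduction with perturbative gadgets. Designate a set of logical qubits carrying an encoded QMA-complete $2$-local instance and introduce mediator qubits, coupling each mediator to the logical qubits it serves by ferromagnetic Heisenberg bonds of large weight $J$ and pinning it near a fixed polarization with a strong $1$-local field. Splitting each bond into its diagonal part $Z_iZ_j$ and its flip-flop part $X_iX_j+Y_iY_j = 2(\sigma^+_i\sigma^-_j + \sigma^-_i\sigma^+_j)$, the transverse part moves a mediator off its pinned state, so that in second-order perturbation theory the virtual processes that excite and de-excite a shared mediator generate an effective flip-flop coupling $X_aX_b + Y_aY_b$ between two logical qubits $a,b$ (a superexchange), while diagonal parts and the logical $1$-local fields supply $Z$-type couplings and fields. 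By tuning the weights $w_{ij}$, the mediator geometry, and the $1$-local fields, I would arrange the effective Hamiltonian on the low-energy subspace to reproduce the target interactions up to inverse-polynomial error, invoking a standard Hamiltonian-simulation/gadget lemma~\cite{cubitt2016complexity} to certify that the relevant portion of the spectrum, and hence the ground energy, is preserved within the required additive precision.

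The main obstacle is the rigidity of the allowed two-body term together with the sign constraint $w_{ij}\ge 0$: individual Pauli couplings cannot be dialed independently, so the asymmetry and arbitrary signs of a generic QMA-hard interaction must be produced entirely through the interplay with $1$-local fields and the choice of mediators. The crux is to show that the interactions generated this way span a set rich enough to encode a genuinely QMA-hard family, rather than collapsing to an easy (stoquastic or otherwise sign-problem-free) model; here I would appeal to a completeness/universality classification of two-body interactions~\cite{cubitt2016complexity} to certify that ferromagnetic Heisenberg bonds assisted by arbitrary $1$-local fields are universal. A secondary but laborious point is to keep every perturbative error term controlled below the promise gap, so that the inverse-polynomial precision of the target problem is faithfully transferred; this bookkeeping, rather than the conceptual design of the gadgets, is where most of the work lies.
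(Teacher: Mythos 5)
You should know at the outset that the paper does not prove this statement at all: it is quoted verbatim, with attribution, from \cite{schuch2009computational}, and serves in the paper only to contrast the easiness of EVC's hard constraints with the QMA-hardness of the corresponding soft-constraint (prize-collecting) variant. So the fair comparison is between your sketch and the proof in the cited reference. Your outline is essentially a reconstruction of that proof's strategy: Schuch and Verstraete also reduce from a known QMA-complete $2$-local family and use perturbative mediator gadgets in which strongly pinned ancilla qubits, coupled by Heisenberg bonds, generate effective logical interactions at low order in perturbation theory. Your preliminary observations are also correct and well placed: the bond equals $2(I-\mathrm{SWAP}_{ij})=4P^-_{ij}$, it is stoquastic on its own, and hence all hardness must be injected through non-stoquastic $1$-local fields (with stoquastic fields the problem would sit in StoqMA, consistent with this paper's TVC results).

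The one genuine gap, if your sketch is meant to be self-contained, is the step you compress into ``by tuning the weights, the mediator geometry, and the $1$-local fields, I would arrange the effective Hamiltonian to reproduce the target interactions.'' This is exactly the crux, and it does not follow from the superexchange picture alone: a second-order process through a single polarized mediator produces an effective coupling of \emph{fixed} sign (it is always $-V P_0' (H_0)^{-1} P_0' V$ restricted to the low-energy space), and since the Heisenberg bond is $U\otimes U$-invariant you cannot flip signs by local basis changes on logical qubits. To hit a genuinely QMA-complete target family (e.g., $ZZ$ and $XX$ couplings with arbitrary signs and magnitudes, as in the Biamonte--Love form used by Schuch--Verstraete), one must polarize mediators along different axes, exploit both second- and third-order terms, and use the freedom in the field strengths to cancel unwanted induced terms; showing that this restricted toolbox spans a hard family is the substance of the reference's proof, not bookkeeping. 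Your fallback appeal to the complexity classification of \cite{cubitt2016complexity} does legitimately close the argument, since the Heisenberg interaction plus arbitrary $1$-local terms is QMA-complete there; but note that the classification's treatment of this very case rests on the Schuch--Verstraete construction, so invoking it reduces your proof to a citation --- which is precisely, and acceptably, what the paper itself does.
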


\subsection{Quantum local ratio method}
The local ratio method is a general framework for approximating discrete optimization problems.  For a given problem, the local ratio method can be used to design an algorithm for generating an approximate solution as well as a technique for analyzing the corresponding approximation factor. 
 Given a problem instance the algorithm proceeds in rounds.  In each round, the algorithm selects a constraint and changes the instance to a simpler instance by modifying the objective function based on the constraint.  The algorithm terminates when the instance has a trivial solution and outputs such a solution. The quality of approximation depends only on the worst-case constraint, which is local, and this is where the method draws its name.  While the method is based on a mathematical programming formulation or relaxation of the problem, specifying the objective and constraints, this is never actually solved directly.  This makes for a powerful combination enabling simple and fast algorithms based on convex relaxations that do not need to be solved.  An equivalent perspective, called the primal-dual schema, shows that the local ratio method produces feasible (but not optimal) primal and dual solutions to a convex relaxation and bounds their gap to bound the approximation guarantee. 
 

\paragraph{Local ratio for Vertex Cover} We illustrate the local ratio method for the weighted Vertex Cover problem~\cite{bar2004local} by starting with a integer programming formulation for Vertex Cover:  
\begin{align}\label{eq:VC-integer-program-overview}
    &\min \sum_{i\in V} c_i x_i\\
     \label{cons:VC-overview}
     s.t.\,\,\,\,\,\,\,\,\,\,\, & x_i + x_j \geq 1 \quad \forall\, ij \in E\\  
    & x_i \in \{0,1\} \quad \forall\, i \in V.
    \label{cons:VC-integrality-overview}
\end{align}

Each variable $x_i$ indicates whether vertex $i$ is selected in a vertex cover.  The cost of the vertex cover is then linear in these variables, and the constraints ensure that each edge has at least one endpoint selected.  This NP-complete integer program precisely captures Vertex Cover, and it can be relaxed to a linear program, solvable in polynomial-time, by replacing Constraint~\eqref{cons:VC-integrality-overview} with $x_i \in [0,1]$ for all $i \in V$.

The local ratio method for Vertex Cover produces a feasible solution $\hat{x} = (\hat{x}_1,\ldots,\hat{x}_n)$ to the integer program above.  It first selects all vertices with zero cost, which does not affect the approximation ratio:
\begin{enumerate}
\item Set $\hat{x}_i = 1$ for all $i$ with $c_i = 0$.
\item Repeat until $\hat{x}$ is a vertex cover:
\begin{enumerate}
    \item\label{alg:local-ratio-step1} 
    Select an edge $ij \in E$ uncovered by $\hat{x}$.  Let $\epsilon_{ij} = \min\{c_i, c_j\}$, and modify the objective $c(x)$ to be $c(x) - \epsilon_{ij}(x_i + x_j)$.
    \item\label{alg:local-ratio-step2} 
    Set $\hat{x}_i = 1$ for all vertices $i$ such that $c_i = 0$.
\end{enumerate}
\end{enumerate}
The algorithm is guaranteed to cover at least one unconvered edge in each step, since Step~\eqref{alg:local-ratio-step1} ensures that for an uncovered $ij \in E$, either $c_i$ or $c_j$ becomes 0, and Step~\eqref{alg:local-ratio-step2} selects such vertices.  Thus the algorithm terminates in at most $|E|$ steps.

Let $x^*$ be an optimal solution. The algorithm gives the decomposition $c(x) = r(x) + \sum_{ij \in E} \epsilon_{ij}(x_i + x_j)$, where we take $\epsilon_{ij} = 0$ for edges not selected by the algorithm; the linear function $r(x)$ is the modified objective function upon termination of the algorithm, and we have $r(\hat{x}) = 0$.  Now consider the approximation ratio:
\begin{equation}
\alpha := \frac{c(\hat{x})}{c(x^*)} = \frac{\sum_{ij \in E} \epsilon_{ij}(\hat{x}_i + \hat{x}_j)}{r(x^*) + \sum_{ij \in E} \epsilon_{ij}(x^*_i + x^*_j)} \leq \frac{\sum_{ij \in E} \epsilon_{ij}(\hat{x}_i + \hat{x}_j)}{\sum_{ij \in E} \epsilon_{ij}(x^*_i + x^*_j)}.
\end{equation}
For each edge $ij$, the ``local ratio'' $(\hat{x}_i + \hat{x}_j)/(x^*_i + x^*_j)$ is at most 2, since $x^*_i + x^*_j \geq 1$ by Constraint~\eqref{cons:VC-overview} and $\hat{x}_i + \hat{x}_j \leq 2$.  Since the local ratio is at most 2 and $\epsilon_{ij} \geq 0$ for all $ij$, we get $\alpha \leq 2$.  

In retrospect, by the analysis of the local ratio above, we see that the modification to the objective function in Step~\eqref{alg:local-ratio-step1} comes directly from Constraint~\eqref{cons:VC-overview}.  More generally, the approximation guarantee of the local ratio method is determined by the worst-case local ratio between a feasible solution returned by the algorithm and an optimal solution with a respect to a constraint. 

Another byproduct of the analysis is that we could have taken $x^*$ to be the optimal solution to the linear program relaxation, since we only use that $x^*$ is nonnegative and satisfies Constraint~\eqref{cons:VC-overview}.  Therefore the above also shows that the worst-case gap between the linear program relaxation value and a solution produced by the algorithm is at most 2.

\paragraph{A quantum local ratio method} In this work we extend the approach above for constrained local Hamiltonian problems.  We are interested in minimizing a $k$-local Hamiltonian $H$ satisfying $H\succeq 0$ and subject to $2$-local constraints: $\tr[C_{ij}\rho]= 0\,\,\forall \,\,ij\in E$.  The basic idea is the same as for the classical Vertex Cover problem. In each round we subtract local terms $\epsilon_{ij} H_{ij}$ from $H$, ensuring $H\succeq 0$ for the resulting $H$. We maintain a product state solution and terminate when it satisfies all the constraints.  However, in our case $H_{ij}$ is not the same as $C_{ij}$, and bounding the local ratio is more involved.  In our case the local ratio for $ij$ is
\begin{equation}
\label{eq:quantum-local-ratio}
\frac{\max_{\rho : \tr[C_{ij} \rho]=0} \tr[H_{ij} \rho]}{\min_{\rho : \tr[C_{ij} \rho]=0} \tr[H_{ij} \rho]}.   
\end{equation}
While $\lambda_{\max}(H_{ij})/\lambda_{\min}(H_{ij})$ is an immediate upper bound, better approximation guarantees are obtained by directly considering \Cref{eq:quantum-local-ratio}.  In our case, we further improve this bound by observing the the maximum in the numerator can be restricted to the set of states $\rho$ that are output by our algorithm (product states in our case). 
Using this quantum local ratio method, we get a $(2+ \sqrt{2})$-approximation for TVC and a $4.194$-approximation for the Transverse Prize Collecting Vertex Cover.
\begin{restatable}{theorem}{LocalratioAlgForTVC}
    \Cref{alg:TVC} is a $(2+\sqrt{2})$-approximation algorithm for the Transverse Vertex Cover problem.
\end{restatable}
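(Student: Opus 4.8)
The plan is to instantiate the quantum local ratio framework for the specific objective $H=\sum_i c_i\phi_i$ with edge constraints $\tr[\ketbra{00}{00}_{ij}\rho]=0$, and to reduce the global guarantee to a single per-edge local-ratio computation. First I would make precise the decomposition that \Cref{alg:TVC} produces. Mirroring the classical algorithm, it keeps residual costs $c_i\ge 0$ and a product state in which each qubit is either \emph{selected} and set to $\ket{1}$, or \emph{unselected} and set to the minimum eigenstate of $\phi_i$; it repeatedly picks an uncovered edge $ij$ (one whose product-state endpoints are both unselected), subtracts $\epsilon_{ij}:=\min\{c_i,c_j\}$ from both $c_i$ and $c_j$, and selects any vertex whose residual cost reaches $0$. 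This yields $H=r+\sum_{ij}\epsilon_{ij}(\phi_i+\phi_j)$ with residual $r=\sum_i c_i^{\mathrm{fin}}\phi_i\semigeq 0$, so the per-edge operator is $H_{ij}=\phi_i+\phi_j$. The output state $\hat\rho$ is feasible at termination and satisfies $\tr[r\hat\rho]=0$, since a selected qubit has $c_i^{\mathrm{fin}}=0$ and an unselected qubit contributes $\tr[\phi_i\hat\rho]=0$. The local ratio theorem -- if $\hat\rho$ is an $\alpha$-approximation for each summand of $H=\sum_k H^{(k)}$ with $H^{(k)}\semigeq 0$, then it is an $\alpha$-approximation for $H$, using $\sum_k \mathrm{OPT}(H^{(k)})\le \mathrm{OPT}(H)$ -- then reduces the entire analysis to showing that $\hat\rho$ is a $(2+\sqrt{2})$-approximation for each $\epsilon_{ij}H_{ij}$, i.e., to bounding the per-edge ratio of \Cref{eq:quantum-local-ratio}.

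Next I would bound the numerator and denominator of that ratio for $H_{ij}=\phi_i+\phi_j$. For the numerator I use the crucial restriction to the product states the algorithm can output: on edge $ij$ at least one endpoint is selected (in $\ket{1}$) and the other either is selected or contributes $0$, so the largest possible value occurs when both endpoints are $\ket{1}$, giving $\tr[H_{ij}\hat\rho]\le \tfrac{1-\tr[Z_i\phi_i]}{2}+\tfrac{1-\tr[Z_j\phi_j]}{2}=:N_{ij}$, which is strictly better than the naive $\lambda_{\max}(H_{ij})=2$. For the denominator, since $\rho\semigeq 0$ and $\tr[\ketbra{00}{00}_{ij}\rho]=0$ force $\rho$ onto the feasible subspace $W=\mathrm{span}\{\ket{01},\ket{10},\ket{11}\}$, the minimum equals the smallest eigenvalue $D_{ij}=\lambda_{\min}\!\big(\Pi_W(\phi_i+\phi_j)\Pi_W\big)$ of an explicit $3\times 3$ matrix, where $\Pi_W$ projects onto $W$. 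Writing $\phi_i=\ketbra{\psi_i}{\psi_i}$ with $\ket{\psi_i}=\cos(\theta_i/2)\ket{0}-\sin(\theta_i/2)\ket{1}$ -- the stoquasticity of $H$ together with $\tr[Z_i\phi_i]\le 0$ forces this form with $\theta_i\in[\pi/2,\pi]$ -- this matrix has vanishing $\ket{01},\ket{10}$ entry; in the symmetric case $\theta_i=\theta_j$ the antisymmetric vector $(\ket{01}-\ket{10})/\sqrt{2}$ decouples, leaving a $2\times 2$ block whose smaller eigenvalue is $1+\tfrac{u}{2}-\tfrac12\sqrt{2-u^2}$ with $u:=-\tr[Z_i\phi_i]\in[0,1]$.

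Finally I would optimize $N_{ij}/D_{ij}$ over the admissible angles. Along the symmetric diagonal the ratio is $\frac{1+u}{\,1+u/2-\frac12\sqrt{2-u^2}\,}$, which is maximized at $u=0$ (that is, $\theta_i=\theta_j=\pi/2$, where $\phi_i=\ketbra{-}{-}$ is the maximally transverse projector) and there evaluates to $\frac{2}{2-\sqrt{2}}=2+\sqrt{2}$. Combining the per-edge bound with the local ratio theorem then gives $\tr[H\hat\rho]\le (2+\sqrt{2})\,\mu^*$. I also need correctness and termination of \Cref{alg:TVC}: residual costs stay nonnegative, and each round covers at least one new edge, so the algorithm halts after $O(|E|)$ rounds with a feasible product state.

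The main obstacle is the last optimization: establishing rigorously that the worst case over the two-parameter family $(\theta_i,\theta_j)$ occurs at the symmetric point $\theta_i=\theta_j=\pi/2$ rather than at an asymmetric configuration. Off the diagonal the $3\times 3$ matrix no longer block-diagonalizes, so $D_{ij}$ is a root of a cubic and the ratio becomes an awkward two-variable function. I expect the cleanest route is to avoid the explicit maximization and instead exhibit a positive-semidefinite certificate, proving directly that $\Pi_W(\phi_i+\phi_j)\Pi_W\semigeq \frac{1}{2+\sqrt{2}}\,N_{ij}\,\Pi_W$ for all admissible $\phi_i,\phi_j$; this converts the claim into a single operator inequality on the three-dimensional space $W$ that can be verified uniformly. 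Confirming that the numerator must be restricted to algorithm-reachable product states (rather than $\lambda_{\max}$), which is exactly what makes $2+\sqrt{2}$ attainable, is the other point that requires care.
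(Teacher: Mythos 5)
Your outline coincides with the paper's proof in every structural respect: the same decomposition $H = R + \sum_{ij}\epsilon_{ij}(\phi_i+\phi_j)$ with $H_{ij}=\phi_i+\phi_j$ fed into \Cref{thm:local_ratio}, the same numerator bound $N_{ij} = 1 - \tfrac{1}{2}(\tr[Z_i\phi_i]+\tr[Z_j\phi_j])$ obtained by noting that among the algorithm-reachable edge configurations $\ketbra{1}{1}\otimes\ketbra{1}{1}$ dominates (using $\tr[Z_i\phi_i]\leq 0$), and the same reduction of the denominator to $\lambda_{\min}$ of the $3\times 3$ compression onto $\mathrm{span}\{\ket{01},\ket{10},\ket{11}\}$. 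The gap is exactly where you flag it, and it is a genuine one, not a formality: you verify the ratio only on the symmetric slice $\theta_i=\theta_j$ and defer the asymmetric case to an unexhibited certificate. The symmetric slice does \emph{not} dominate pointwise. Parametrizing as the paper does, $\tr[Z_i\phi_i]=\alpha$ and $\tr[Z_j\phi_j]=2k-\alpha$, the compressed operator has smallest eigenvalue $\tfrac{1}{2}\bigl(2-k-\sqrt{2\alpha^2-4\alpha k+k^2+2}\bigr)$; since the radicand is convex in $\alpha$, for fixed $k$ the worst case sits at the \emph{asymmetric} endpoints $\alpha\in\{0,2k\}$ (radicand $k^2+2$), not at the symmetric point $\alpha=k$ (radicand $2-k^2$, strictly smaller for $k\neq 0$). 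Concretely, at $k=-1/4$ the symmetric ratio is $\approx 2.91$ while the asymmetric worst case is $\approx 3.07$. It merely happens that the global maximum of $\frac{2(1-k)}{2-k-\sqrt{k^2+2}}$ over $k\in[-1,0]$ lands at $k=0$, which lies on your slice; your argument as written does not establish this.

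The paper closes the gap by stating the closed-form eigenvalue above for general $(\alpha,k)$ and then doing the two one-dimensional optimizations (over $\alpha$, then $k$). If you want to finish your version, the clean observation that makes the cubic tractable is interlacing: $\phi_i\otimes I + I\otimes\phi_j$ has spectrum $\{0,1,1,2\}$ on the full two-qubit space, so Cauchy interlacing pins the middle eigenvalue of any $3$-dimensional compression at exactly $1$, and the remaining two eigenvalues solve a quadratic — this yields the formula the paper asserts without proof. Your proposed operator inequality $\Pi_W(\phi_i+\phi_j)\Pi_W \semigeq \frac{N_{ij}}{2+\sqrt{2}}\,\Pi_W$ is true and is precisely equivalent to the per-edge bound, but exhibiting such a certificate \emph{is} the content of the claim; stating that one "expects" it to exist leaves the theorem unproved. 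With the eigenvalue identity (or the interlacing shortcut) inserted, your outline completes to essentially the paper's argument; everything else in your proposal — feasibility, termination in $O(|E|)$ rounds, $\tr[R\hat\rho]=0$, and the restriction of the numerator to reachable product states — matches the paper and is sound.
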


\begin{restatable}{theorem}{LocalratioAlgForTPCVC}
    \Cref{alg:SPCVC} is a 4.194-approximation algorithm for the Transverse Prize Collecting Vertex Cover problem.
\end{restatable}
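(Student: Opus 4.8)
The plan is to recast a Transverse Prize Collecting Vertex Cover instance as the minimization of a fixed $2$-local positive semidefinite Hamiltonian and then bound the approximation factor through a quantum local ratio lemma. Folding the edge penalties $w_{ij}\tr[\ketbra{00}{00}_{ij}\rho]$ into the objective and writing the stoquastic $1$-local terms $c_i\phi_i$ in the $X$--$Z$ plane (using $\tr[Z_i\phi_i]\le 0$ to fix the sign of the diagonal part), the objective becomes, up to a global constant, an anti-ferromagnetic transverse field Ising Hamiltonian. After shifting by constants to guarantee positive semidefiniteness, I would fix a decomposition $H=\sum_{ij\in E}H_{ij}$ into $2$-local terms with each $H_{ij}\semigeq 0$; the precise way the $1$-local terms and the constant shifts are distributed among the edges is a design parameter that I would tune to optimize the final ratio.

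The analysis rests on superadditivity of the minimum eigenvalue: for any decomposition $H=\sum_k H_k$ with each $H_k\semigeq 0$ one has $\lambda_{\min}(H)\ge\sum_k\lambda_{\min}(H_k)$. Consequently, if the product state $\hat\rho$ returned by \Cref{alg:SPCVC} is optimal on the terminal residual and satisfies $\tr[H_{ij}\hat\rho]\le\alpha\,\lambda_{\min}(H_{ij})$ for every edge, then $\tr[H\hat\rho]\le\alpha\,\lambda_{\min}(H)=\alpha\,\mu^*$, so $\hat\rho$ is an $\alpha$-approximation. I would first verify that the algorithm, which processes edges in rounds, subtracting the maximal multiple $\epsilon_{ij}H_{ij}$ that keeps the residual positive semidefinite and committing single-qubit states as their residual fields vanish, maintains a positive semidefinite residual and terminates with a product state that is optimal on what remains. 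This reduces the entire problem to controlling the per-edge ratio $\tr[H_{ij}\hat\rho]/\lambda_{\min}(H_{ij})$, where, crucially, the numerator is a \emph{product-state} value committed by the algorithm rather than $\lambda_{\max}(H_{ij})$.

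The heart of the proof, and the main obstacle, is bounding this per-edge ratio by $4.1938\ldots$. Because the edge terms are stoquastic, both the $2$-qubit ground state computing $\lambda_{\min}(H_{ij})$ and the single-qubit states the algorithm commits can be taken real, i.e. parametrized by angles in the $X$--$Z$ plane; this collapses the numerator (a product-state value) and the denominator (a $4\times 4$ real symmetric ground energy) into explicit functions of the edge data $c_i,c_j,w_{ij}$, the field angles of $\phi_i,\phi_j$, and the committed product-state angles. The worst-case ratio is the supremum of this expression over all admissible edge configurations subject to validity of the decomposition ($H_{ij}\semigeq 0$). I expect the bulk of the work to be this low-dimensional constrained optimization: reducing it to a one-parameter family using scale invariance of the ratio and the extremal structure of the committed states, and then certifying that its optimum equals $4.1938\ldots$ via calculus together with a numerical check. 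Showing that the constant-shift and field-distribution choices actually attain this value rather than a worse one is the delicate part, since an overly aggressive shift shrinks $\lambda_{\min}(H_{ij})$ and inflates the ratio; this also explains why the guarantee is weaker than the $2+\sqrt 2$ bound for TVC, where the covering constraint keeps the denominator bounded away from zero.

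Finally, I would assemble the pieces: the local ratio lemma converts the uniform per-edge bound into the global guarantee $\tr[H\hat\rho]\le 4.1938\cdot\mu^*$, and I would note that, exactly as in the classical local ratio method, no convex relaxation is ever solved, so \Cref{alg:SPCVC} remains a deterministic linear-time classical algorithm whose guarantee is nevertheless taken against the relaxation lower bound $\sum_{ij}\lambda_{\min}(H_{ij})$.
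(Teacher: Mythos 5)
Your framework is the paper's: decompose $H$ into nonnegative multiples of PSD edge terms plus a residual annihilated by the output state, bound the per-edge ratio of the committed product-state value to $\mu^*_{ij}=\lambda_{\min}(H_{ij})$, and sum via superadditivity of $\lambda_{\min}$ (this is exactly \Cref{thm:local_ratio} specialized to the unconstrained case). But there is a genuine gap at what you yourself call the heart of the proof: you never specify the edge term, and the entire content of the $4.194$ bound lives in that choice. The paper fixes
\begin{equation*}
H_{ij} \;=\; \phi_i + \phi_j + \lambda\,\frac{(I+Z_i)\otimes(I+Z_j)}{4},
\qquad
\lambda \;=\; \frac{2}{1-\tr(Z_i\phi_i)} + \frac{2}{1-\tr(Z_j\phi_j)},
\end{equation*}
where $\lambda$ is chosen precisely so that the two candidate worst-case committed states, $(I-\phi_i)\otimes(I-\phi_j)$ and $\ketbra{1}{1}\otimes\ketbra{1}{1}$ (the mixed combinations are dominated because $\tr[Z_\ell\phi_\ell]\le 0$), yield the \emph{same} value $1-\tfrac{1}{2}(\tr[Z_i\phi_i]+\tr[Z_j\phi_j])$. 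This balancing is the design insight; without it, your ``tune the distribution of $1$-local terms'' step is an unconstrained search, and nothing in your outline certifies that the optimum of that search is $4.1938\ldots$ rather than something worse. The paper then closes the argument by showing the ratio of the balanced numerator to $\lambda_{\min}(H_{ij})$ is monotonically increasing in $\tr[Z_i\phi_i]$ and $\tr[Z_j\phi_j]$, hence maximized at $\tr[Z_i\phi_i]=\tr[Z_j\phi_j]=0$, where it evaluates numerically to $\approx 4.19387<4.194$. Your proposal defers all of this to ``calculus together with a numerical check,'' i.e., it defers the theorem.

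A secondary but real problem is your opening step of ``shifting by constants to guarantee positive semidefiniteness.'' The Hamiltonian of \Cref{def:transverse_PCVC} is already PSD as written ($\phi_i\semigeq 0$ and $(I+Z_i)\otimes(I+Z_j)/4\semigeq 0$ with nonnegative weights), so no shift is needed; more importantly, a multiplicative approximation ratio is not invariant under additive shifts of the objective, so recasting the instance as a shifted anti-ferromagnetic transverse Ising model and analyzing that would change $\mu^*$ and invalidate the guarantee. The correct statement (which the paper makes) is that the TIM equivalence holds only after rearranging terms into this specific PSD form, and the local ratio analysis must be carried out on that form, not on a freely shifted one.
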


\Cref{sec:QLR} provides a detailed account of our quantum local ratio method.

\onote{I was also thinking to keep things simple here and just describe the LR method for classical VC.  I think the description of the quantum version in Section 6 is reasonably short and clear, and we can just appeal to it as necessary, or maybe even move portions of it here, like the definitions. I'll look into this when I do my pass.}

\section{Discussion}
\subsection{PXP model as a special case of TVC}
\label{sec:PXP}

PXP model is a Hamiltonian commonly defined on a lattice in the context of Rydberg atoms. On a 1d chain, the Hamiltonian is
\begin{align}
    H_{\text{PXP, 1d}} = -\sum_i P_{i-1} X_i P_{i+1}
\end{align}
where $P_i = \ketbra{1}{1}_i$ with or without a boundary. The Hamiltonian $H_{\text{PXP}}$ is block diagonal in the computational basis where different blocks are identified by whether the reduced state of qubits $i,i+1$ are in the span of $\left\{\ket{01}, \ket{10}, \ket{11}\right\}_{i,i+1}$ or not. In a given computational basis state, the Hamiltonian $H_{\text{PXP}}$ acting on $\ket{00}_{i,i+1}$ state will leave it unaltered, effectively creating a boundary at $(i,i+1)$. Because of this, the subspace of primary interest for the PXP model is where all the neighboring qubit states are in the span of $\left\{\ket{01}, \ket{10}, \ket{11}\right\}$. This is exactly the Vertex Cover constraint in the computational basis. Unlike the transverse Ising model on a 1d chain which is integrable, the PXP model on a 1d chain has been proven to be non-integrable \cite{park2024proof}, and is known to host quantum scars. This phenomenon is of interest in the study of weak ergodicity breaking in non-integrable systems \cite{bernien2017probing, turner2018weak}. With the convention that a Rydberg state of a two-level atom is identified with $\ket{0}$, vertex cover constraints amount to not having two neighboring atoms to be both in Rydberg state $\ket{00}$. Then PXP Hamiltonian on a general graph $G(V,E)$ can be defined as
\begin{align}
    H_{\text{PXP, } G} = -\sum_i w_i \Pi_i X_i
\end{align}
where $w_i \geq 0\, \forall i \in V$ and $\Pi_i = \bigotimes_{\{j:\, ij \in E\}} \ketbra{1}{1}_j$ are the operators projecting all the neighbours of $i^{\text{th}}$ qubit to $\ket{1}$. PXP model can be seen as a special case TVC where the 1-local objective Hamiltonian, up to shifts of identity, is completely off-diagonal with only Pauli-$X$ terms. We conjecture that estimating the ground state energy of the PXP Hamiltonian in the subspace where all the vertex cover constraints are satisfied is StoqMA-hard.


\onote{Approximation algorithms and hardness are underdeveloped for StoqMA.  Is there is an analogue of PCP and Unique games hardness?  Cite Robbie's approximation (though EPR is not known to be StoqMA complete)}

\subsection{Open questions}

\begin{enumerate}

\item We prove that TVC is StoqMA-hard where the NP-hardness is explicit in the diagonal terms of the objective function since the classical Vertex Cover is a special case of this. Therefore, we already know that the problem is at least NP-hard and our result improves the lower bound from NP-hard to StoqMA-hard. What is the complexity of TVC in the special case where the objective function, up to shifts of identity, is completely off-diagonal with only Pauli-$X$ terms? This corresponds to PXP model without any additional 1-local Pauli-$Z$ terms and does not contain classical Vertex Cover as a special case.

\item Given that our approach to generalize Vertex Cover by having non-diagonal rank-1 projectors as constraints fails to produce a QMA-hard problem, what is a natural quantum generalization of Vertex Cover problem that is QMA-hard?

\item One can formulate relaxations for TVC based on the quantum Lasserre hierarchy. How do approximation algorithms based on rounding with respect to such hierarchies compare to the local ratio method? For classical Vertex Cover, it is known that the SDP-based rounding algorithms cannot do better than algorithms based on the local ratio method.

\item How can we further broaden the scope and efficacy of the quantum local ratio method, notably to produce entangled states? Can such approaches be used to obtain provably optimal approximation algorithms for the Transverse Vertex Cover problem, under the unique games conjecture?


\end{enumerate}

\subsection{A roadmap for the sequel} \Cref{sec:stoq_VC} introduces and proves StoqMA-hardness of the Transverse Vertex Cover problem and related problems.  Our quantum local ratio method for these problems is developed and analyzed in \Cref{sec:QLR}. Finally, the Entangled Vertex Cover is introduced and discussed in \Cref{sec:entangled_VC}, with a focus on the classical polynomial-time solvability of ``entangled'' instances. 
\section{Stoquastic generalization of Vertex Cover and Prize Collecting Vertex Cover}
\label{sec:stoq_VC}
\onote{We need to explain the classical VC and PVC problems.}
\onote{We should settle on conventions for problem names: e.g., Stoquastic Vertex Cover vs stoquastic cover vertex.  I prefer the former and am also fine using abbreviations like SVC}

\subsection{Vertex Cover and Prize Collecting Vertex Cover}
\label{sec:classical-VC}
The classical Vertex Cover problem is a well-known discrete optimization problem where given a graph $G(V,E)$ with positive weighted vertices $\{c_i \geq 0\}_{i \in V}$, the task is to pick a subset of vertices $S \subseteq V$ such that for every edge $ij \in E$ either $i \in S$ or $j \in S$ (or both) while minimizing the weighted sum of vertices picked to cover the edges. Vertex Cover is a special case of the Set Cover problem~\cite{vazirani2001approximation} and may be expressed as an integer program:
\begin{align}\label{eq:VC-integer-program}
    &\min \sum_{i\in V} c_i x_i\\
     \label{cons:VC}
     s.t.\,\,\,\,\,\,\,\,\,\,\, & x_i + x_j \geq 1 \quad \forall\, ij \in E\\  
    & x_i \in \{0,1\} \quad \forall\, i \in V.
\end{align}
The boolean variable $x_i$ indicates whether vertex $i$ is included in a vertex cover $S$.  The objective seeks to minimize the total cost of the vertices in $S$, and constraint \eqref{cons:VC} ensures that $S$ contains at least one endpoint of each edge.

The above integer program for Vertex Cover may be cast as a constrained local Hamiltonian problem: 
\begin{align}\label{eq:classic_vc}
&\min_{\rho} \sum_{i\in V} c_i \tr[\ketbra{1}{1}_i \rho]\\
\label{cons:Ham-VC} s.t.\,\,\,\,\,\,\,\,\,\,\, 
&\tr[\ketbra{00}{00}_{ij} \rho]=0 \,\,\,\,\, \forall ij\in E.\\
&\mathrm{Tr}(\rho)=1,\\ 
&\rho \semigeq 0.
\end{align}
To see that this captures Vertex Cover, first observe that since the objective operator $\sum_i c_i \ketbra{1}{1}_i$ and constraint operators $\ketbra{00}{00}_{ij}$ are diagonal in the computational basis, we may assume the same of an optimal $\rho$.  Thus $\rho$ is a mixture of basis states, $\sum_k \alpha_k \ket{x_{k,1}\ldots x_{k,n}}\bra{x_{k,1}\ldots x_{k,n}}$ with the $\alpha_k \geq 0$.  Constraint \eqref{cons:Ham-VC} ensures that constraint \eqref{cons:VC} holds for $x_{k,1},\ldots,x_{k,n}$ for each $k$, since for $ij \in E$,
\begin{equation}
\tr[\ket{00}\bra{00}_{ij}\rho] = 0 
\Rightarrow \bra{x_{k,1}\ldots x_{k,n}} \ket{00}\bra{00}_{ij} \ket{x_{k,1}\ldots x_{k,n}} = 0,\forall k
\Rightarrow x_{k,i} = 1 \text{ or } x_{k,j}=1,\forall k.  
\end{equation}
Likewise, the objective of the the constrained 2-local Hamiltonian problem \eqref{eq:classic_vc}, corresponds to the expectation of the objective of the integer program \eqref{eq:VC-integer-program}:
\begin{equation}
\sum_{i\in V} c_i \mathrm{Tr}[\ketbra{1}{1}_i \rho] = \sum_k \alpha_k \sum_{i\in V} c_i x_{k,i}.
\end{equation}
Thus we may assume that an optimal solution to \eqref{eq:classic_vc} is a pure basis state achieving minimum cost, giving us a direct correspondence between the two optimization problems above.  

The decision version of Vertex Cover was one of Karp's original 21 NP-complete problems and is known to be hard to approximate with an approximation ratio $2-\varepsilon$, for any constant $\varepsilon > 0$, under the Unique Games Conjecture~\cite{khot2008vertex}.  An unconstrained version of vertex cover known as Prize Collecting Vertex Cover is obtained by dropping the constraint that each edge in the graph is covered.  Instead, a penalty is imposed for not covering an edge, and the task is the minimize the total cost associated with picking the vertices plus the penalty for edges that are not covered by the picked vertices. Given a graph $G(V,E)$ with cost $c_i \geq 0$ for picking vertex $i \in V$ and $c_{ij} \geq 0$ for not picking edge $ij \in E$, the corresponding classical Hamiltonian can be written as:
\begin{align}
    \quad \sum_{i \in V} c_i \ket{1}\bra{1}_i + \sum_{ij\in E} c_{ij} \ket{00}\bra{00}_{ij} = \sum_{i \in V} c_i \frac{(I-Z_i)}{2} + \sum_{ij \in E} c_{ij} \frac{(I+Z_i)\otimes(I+Z_j)}{4}.
\end{align}

Finding the ground state energy of this diagonal Hamiltonian is also NP-hard, and there are various algorithms based on linear programming, the local ratio (or primal-dual) method, and semi-definite programming \cite{Has06, PANDEY2018121} that achieve an approximation ratio of $2$ for both Vertex Cover and Prize Collecting Vertex Cover. An easy way to see that hardness results above also hold is by reducing Vertex Cover to the Prize Collecting version by setting each penalty $c_{ij} = \max\{c_i, c_j\}$, in which case it is never beneficial to pay any penalties.

\subsection{Transverse Vertex Cover}\label{sec:tvc_complexity}

Towards a quantum generalization of the classical vertex cover problem, we consider the problem where the edge covering constraints remain the same as in the classical vertex cover but the objective Hamiltonian is a sum of single qubit rank-1 projectors. 
For any 1-local objective Hamiltonian, we can apply single-qubit unitary rotations around the $Z$-axis on the Bloch sphere to rotate each $\phi_i$ so that it does not have any $Y$ component. Since these rotations are around the $Z$-axis of each qubit, the $Z$-basis constraints are not affected. Further, we can conjugate the necessary qubits with $Z$ to change $X \rightarrow -X$ to make the objective Hamiltonian stoquastic. Therefore it is enough to consider only stoquastic 1-local Hamiltonian as the objective, i.e.,  Tr$(X_i \phi_i) \leq 0$ and Tr$(Y_i \phi_i) = 0$. We formally define the Transverse Vertex Cover problem as follows.

\tvc*

Similar to the case of ground state energy estimation in the local Hamiltonian problem, we are interested in estimating the optimal value of TVC to within a given inverse polynomial precision in the number of qubits. The decision version of TVC optimization problem is in the complexity class StoqMA. Bravyi and Hastings \cite{bravyi2017complexity} showed that estimating the ground state energy of TIM is hard for the class StoqMA. We use perturbative gadgets to show that the ground state energy of TIM can be simulated using TVC thus proving that TVC is StoqMA-hard.
The perturbative gadget we use is based on the Bloch expansion similar to that of \cite{jordan2008perturbative}. We review the perturbative Bloch expansion that is necessary for our proof in Appendix \ref{appx:bloch_expansion}. The general idea behind these kinds of reductions is to efficiently embed the spectrum of a given Hamiltonian (TIM in our case) as the low-energy spectrum of another Hamiltonian in a larger Hilbert space (TVC in our case). We can achieve this by starting from an unperturbed Hamiltonian $H_0$ whose ground space is degenerate and has the dimension equal to the dimension of the Hilbert space whose spectrum we want to embed. Then we add a perturbation term $V$ that splits the degeneracy and achieves the spectrum that we want. The term \emph{gadget} in perturbative gadget involves designing the unperturbed Hamiltonian $H_0$ and the perturbation $V$ that achieves the goal of efficiently embedding the spectrum with an arbitrary additive precision.

\thmtvchardness*

\begin{proof}
    Let $G(V,E)$ be a maximum degree-3 interaction graph of the TIM with the Hamiltonian $H_{\text{TIM}} = \sum_{ij \in E}w_{ij}Z_iZ_j + \sum_{i \in V} h_i X_i$, where $h_i \leq 0 \ \forall i \in V$ whose ground state energy we want to estimate within a precision $\epsilon = 1/\text{poly}(|V|)$. 
    
    We start by encoding the logical qubits from the TIM into the lowest energy eigenspace of a 1-local diagonal Hamiltonian restricted to the vertex covering Hilbert subspace. For every qubit $i$ in the TIM Hamiltonian, we will have two qubits $i$ and $\tilde{i}$ in the TVC with an objective Hamiltonian term 
    \begin{align} \label{eq:TVC_H0_i}
        (H_0)_i = -\frac{7}{2}(Z_i + Z_{\tilde{i}})
    \end{align}
    and an edge between $i$ and $\tilde{i}$ in the constraint graph. Since an edge restricts the Hilbert space to be the span of $\left\{\ket{01}, \ket{10}, \ket{11}\right\}$ for $i$ and $\tilde{i}$, the ground subspace in the restricted Hilbert space of coverings is a span of $\left\{\ket{01}, \ket{10}\right\}$, which we map to $\ket{0}$ and $\ket{1}$ states respectively of qubit $i$ in the TIM. This representation is analogous to the dual-rail representation that was used in many previous works including in \cite{bravyi2017complexity} to encoded qubits using hardcore bosons.
    
    For every pair of qubits $ij$ in the TIM with a $ZZ$ interaction, we create a gadget that adds 4 additional edge qubits with labels $(a,b)$ where $a \in \{i, \tilde{i}\}$ and $b \in \{j, \tilde{j}\}$. The qubit labeled with $(a,b)$ has edges connecting it with qubit $a$ and $b$ in the constraint graph. \Cref{fig:tvc_gadget} illustrates the covering constraint graph for 2 encoded qubits. We also add the following objective Hamiltonian term 
    \begin{align}\label{eq:TVC_H0_ij}
        (H_0)_{ij} = -\frac{1}{2}\left( Z_{(i,j)} + Z_{(i,\tilde{j})} + Z_{(\tilde{i},j)} + Z_{(\tilde{i},\tilde{j})} + 2I\right)
    \end{align}
    for the edge qubits. The total Hamiltonian $H_0$ is the sum of Hamiltonian terms for vertex qubits and edge qubits,
    \begin{align} \label{eq:unperturbed_tvc_H}
        H_0 = \sum_{i \in V} (H_0)_i + \sum_{ij \in E} (H_0)_{ij},
    \end{align}
    where $(H_0)_i$ and $(H_0)_{ij}$ are defined as in \cref{eq:TVC_H0_i} and \cref{eq:TVC_H0_ij} respectively. 

    \begin{figure}[!ht]
        \centering
        \captionsetup{justification=centering}
        \includegraphics[scale=1]{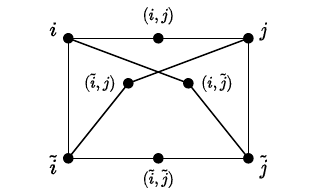}
        \caption{Covering constraint graph of the physical qubits in TVC that encode two qubits $i$ and $j$ from the TIM with a $ZZ$ interaction.}
        \label{fig:tvc_gadget}
    \end{figure}

    Given that we are only considering maximum degree-3 TIMs, the factor of $7$ in front of $(Z_i + Z_{\tilde{i}})$ is sufficient to ensure that the lowest energy eigenstates of $H_0$ in the restricted vertex covering Hilbert subspace have only one of the qubits out of the two qubits $i, \tilde{i}$ in $\ket{1}$ state but not both for every $i$. This also implies that out of the four edge qubits used to encode a $ZZ$ interaction from the TIM, the lowest energy eigenstates have only one of the edge qubits per gadget in $\ket{0}$ state while the other three are in $\ket{1}$ state as illustrated in  \cref{fig:tvc_gadget_basis_states} for two encoded qubits. 
    
    \begin{figure}[!ht]
        \centering
        \captionsetup{justification=centering}
        \includegraphics[scale=0.8]{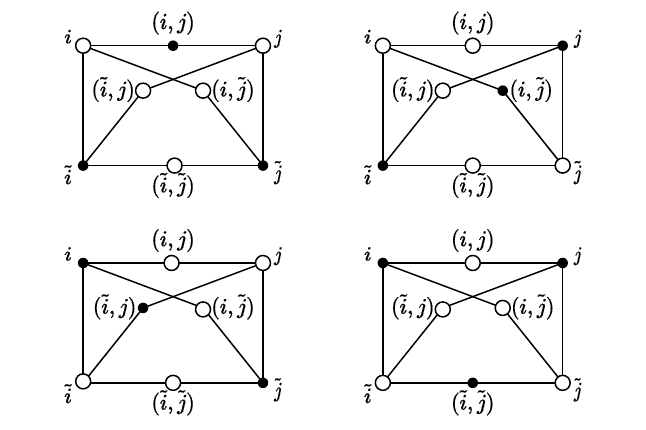}
        \caption{Four basis states of TVC gadget that encodes two qubits. The light-colored circles indicate that the physical qubit is in $\ket{1}$ state while the dark-colored circles indicate that the qubit is in $\ket{0}$ state.}
        \label{fig:tvc_gadget_basis_states}
    \end{figure}
    
    Following the above argument, as we encode a total of $n$ qubits, the dimension of the lowest energy eigenspace of $H_0$ in the restricted covering Hilbert subspace is $2^n$ with an eigenenergy zero, and each computational basis state in that zero energy eigenspace can be mapped back to computational basis states in the TIM based on which of $i$ or $\tilde{i}$ is in $\ket{1}$ state. 
    
    Let us consider the objective Hamiltonian $H_0$ we have so far to be the unperturbed Hamiltonian. To add a $ZZ$ interaction from the TIM, we add a 1-local $Z$ perturbative Hamiltonian term to the edge qubits as
    \begin{align}
        (V_{ZZ})_{ij} = \frac{w_{ij}}{2}(Z_{(i,j)} + Z_{(\tilde{i}, \tilde{j})} - Z_{(i, \tilde{j})} - Z_{(\tilde{i}, j)})
    \end{align}
    where $w_{ij}$ is the weight of the $Z_iZ_j$ interaction in TIM. At first-order perturbation, this gives rise to $ZZ$ interactions between the encoded logical qubits in TVC as
    \begin{align}
        \widetilde{H}_{\text{eff}}^{(1)} = w_{ij}\overline{Z}_i\overline{Z}_j
    \end{align}
    where $\overline{Z}_i$ is the Pauli-$Z$ operator on the encoded $i^{th}$ qubit in TVC. 
    
    To encode the transverse field from TIM, first, let us consider the case where the underlying interaction graph of TIM is 3-regular. Later, we will see how to deal with the case of graphs having degree-2 and degree-1 vertices. Starting from a 3-regular interaction graph of TIM and going through the above gadget reduction to encode the qubits, the encoded logical basis states are hamming distance 8 away from their neighbors in the physical qubit space as depicted in \cref{fig:tvc_flip_qubit}: distance 2 to flip the encoded qubit in $i$ and $\tilde{i}$, plus an additional distance 6 to flip the 2 edge qubits per edge to reach a neighboring encoded basis state. This is equivalent to flipping the $i^{th}$ qubit in TIM. 
    \begin{figure}[!ht]
        \centering
        \captionsetup{justification=centering}
        \includegraphics[scale=0.9]{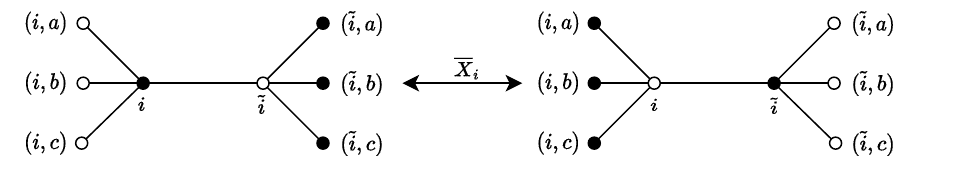}
        \caption{The above figure shows the action of flipping the encoded qubit $i$ in TVC between states $\ket{0}$ and $\ket{1}$. The light-colored circles indicate that the physical qubit is in $\ket{1}$ state while the dark-colored circles indicate that the qubit is in $\ket{0}$ state.}
        \label{fig:tvc_flip_qubit}
    \end{figure}
    Therefore, we can encode the transverse field for each qubit as an $8^{th}$ order perturbative interaction in the TVC. Consider the  perturbation terms as
    \begin{align}
        (V_{X})_i &= -\sqrt{4h_i}\ (X_i + X_{\tilde{i}}) \\
        (V_{X})_{ij} &= -(X_{(i,j)} + X_{(i,\tilde{j})} + X_{(\tilde{i},j)} + X_{(\tilde{i},\tilde{j})})
    \end{align}
    where $h_i$ is the strength of the transverse field for the $i^{th}$ qubit in TIM. The final form of the Hamiltonian after adding the perturbative terms is
    \begin{align}
        H = \Delta H_0 + V_{ZZ} + \Delta^{\frac{7}{8}} V_X, \nonumber
    \end{align}
    \begin{align}
        \text{where}\quad V_{ZZ} = \sum_{ij \in E} (V_{ZZ})_{ij} \quad \text{and} \quad V_X = \sum_{i \in V} (V_X)_i + \sum_{ij \in E} (V_X)_{ij}.
    \end{align}
    In the Bloch expansion, there exists a term at $8^{th}$ order of the form
    \begin{align} \label{eq:8th_order_term}
        P_0(V_{X}H_0^{-1})^7V_{X}P_0
    \end{align}
    which can encode the operators proportional to $\overline{X}_i$. Here $P_0$ is the projector on to the ground subspace of $H_0$ in the restricted Hilbert subspace of coverings.  Note that there are multiple combinatorial ways, specifically there are $69\times 3!\times 3!$ ways, to flip the encoded $i^{th}$ qubit in TVC when the degree of $i^{th}$ qubit in TIM is 3. In all these ways of flipping the basis (low energy) state, the excited states in between are also coverings since we are only considering the restricted covering Hilbert subspace. This is an important restriction as the final ground state will have non-zero overlap with the excited states and for the final ground state to satisfy the covering constraints, we need the excited also to satisfy the covering constraints. When we sum over all the ways we can flip the encoded $i^{th}$ qubit along with the weights that come from $H_0^{-1}$ in between $V$'s in expression (\ref{eq:8th_order_term}), we get an effective term that is equal to $h_i\overline{X}_i$.

    Suppose that there are some qubits in TIM with degree-2 and degree-1 $ZZ$ interaction. Under our gadget reduction, this would lead to encoded basis states that are Hamming distance 6 and 4 away corresponding to flipping those encoding qubits. To adjust to this, we can change the perturbation order at which the transverse field is encoded for these particular qubits by changing the strength of the perturbation. For example, let $i^{th}$ qubit has degree-2 in TIM. Instead of having $\Delta^{\frac{7}{8}}\sqrt{4h_i}(X_i + X_{\tilde{i}})$ as a perturbation term, we would have $\Delta^{\frac{3}{4}}\sqrt{5h_i}(X_i + X_{\tilde{i}})$ which would lead to transverse field on the encoded $i^{th}$ qubit arising at $6^{th}$ order instead of $8^{th}$. Similarly for qubits with degree-1, by changing the perturbation strength from $\Delta^{\frac{7}{8}}\sqrt{4h_i}$ to $\Delta^{\frac{5}{8}}\sqrt{6h_i}$, we can encode the transverse field as a $4^{th}$ perturbative interaction.
    
    Note that there will be non-trivial energy shifts of low energy Hilbert space into which we encoded our qubits at orders $ \leq 8^{th}$ of perturbation due to $V_X$. This shift is computable in polynomial time given the graph $G(V, E)$ and $H_{\text{TIM}}$. A similar notion of energy shift also arises when encoding a $k$-body Pauli interaction directly as a $k^{th}$ order perturbative interaction as in \cite{jordan2008perturbative}. With the energy shift corrected, by having a large enough $\Delta$ which is polynomial in the number of qubits, the lowest expectation energy of $H$ in the restricted covering Hilbert subspace can be made inverse polynomially close to the ground state energy of $H_{\text{TIM}}$. Since estimating the ground state energy of TIM to within a given inverse polynomial additive precision in StoqMA-hard, so is TVC.
\end{proof}
In the proof of \Cref{thm:TVC_hardness}, we show the hardness of TVC where the constraint graph has a maximum degree of 7. In \Cref{thm:TVC_hardness_deg_3}, we improve upon the degree and show that the TVC problem is StoqMA-hard even when the maximum degree of the constraint graph is 3, using a gadget that is similar to the one that has been previously used in \cite{alimonti2000some} to show the hardness classical Vertex Cover on maximum degree-3 graphs.

\begin{theorem}\label{thm:TVC_hardness_deg_3}
    Transverse Vertex Cover problem is StoqMA-hard even when the constraint graph has a maximum degree of 3.
\end{theorem}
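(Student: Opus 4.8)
The plan is to reuse the perturbative construction from the proof of \Cref{thm:TVC_hardness} and to lower the maximum degree of its constraint graph from $7$ to $3$ by a local vertex-splitting gadget in the spirit of the cubic-graph reduction of \cite{alimonti2000some}. The only vertices of degree larger than $3$ in that construction are the dual-rail vertex qubits: a TIM qubit $i$ of degree $d\le 3$ is encoded by the pair $(i,\tilde i)$, which carries the dual-rail edge $(i,\tilde i)$ together with the two edge qubits $(i,j),(i,\tilde j)$ for each incident TIM edge $ij$, giving degree $1+2d\le 7$. Since every edge qubit already has degree $2$, it suffices to break up each high-degree dual-rail qubit while preserving the logical value it carries.

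First I would introduce a logical equality gadget for two dual-rail pairs $A=(a,\tilde a)$ and $B=(b,\tilde b)$: adding the two covering edges $(a,\tilde b)$ and $(\tilde a,b)$ forbids exactly the two logically unequal configurations while leaving the equal ones feasible, so that inside the low-energy covering subspace (where each pair sits in $\mathrm{span}\{\ket{01},\ket{10}\}$) the two pairs are forced to share the same encoded value. Using this gadget I would replace each high-degree pair $(i,\tilde i)$ by a cycle of $O(d)$ dual-rail copies $(i^{(1)},\tilde i^{(1)}),\dots$ linked by equality gadgets, distributing the edge-qubit connections of the incident TIM edges around the cycle so that every physical qubit retains degree at most $3$; the consistency edges then force all copies to carry one common encoded value, playing the role of the single high-degree qubit they replace. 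The diagonal $\overline Z_i\overline Z_j$ interactions are reproduced at first order exactly as before, now acting on whichever copy carries the relevant edge qubits.

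The delicate part is re-deriving the transverse field. Flipping the encoded value of qubit $i$ now means coherently flipping the entire cycle of copies together with their edge qubits, which occurs at a higher but fixed order of the Bloch expansion; as in the degree-$1$ and degree-$2$ cases of \Cref{thm:TVC_hardness}, I would place each such term at the appropriate order by rescaling the strength of the $X$-perturbation so that the resulting effective operator equals $h_i\overline X_i$. This requires checking that every intermediate state along each flip path remains a feasible covering, so that the induced ground state has support only on covering states, that the combinatorial sum over all flip paths weighted by the $H_0^{-1}$ denominators yields the correct coefficient, and that the polynomially many lower-order energy shifts produced by $V_X$ are computed and subtracted, exactly as in the main proof.

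I expect the main obstacle to be precisely this interaction between degree reduction and the transverse-field perturbation: lowering the degree forces the encoded qubit to be spread over many physical qubits, so the $\overline X_i$ term migrates to a high perturbative order, and one must simultaneously (i) keep all intermediate states inside the covering subspace, (ii) verify that the many flip paths through the cycle sum to a nonzero, correctly normalized $\overline X_i$, and (iii) control the accumulated energy shifts, all while holding every constraint-graph degree at $3$ and preserving the encoded TIM spectrum to inverse-polynomial precision.
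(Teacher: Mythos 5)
Your high-level strategy (spread each logical qubit over several physical qubits, push $\overline{X}_i$ to a higher but fixed order of the Bloch expansion, re-verify feasibility of intermediate states and subtract the computable energy shifts) is in the right spirit, but your specific gadget fails the degree count it is supposed to guarantee. Each of your two-edge equality gadgets between copies $(a,\tilde a)$ and $(b,\tilde b)$ consumes one unit of degree on \emph{every} rail qubit involved ($a$ gets the edge to $\tilde b$, $\tilde a$ the edge to $b$), and every rail qubit already spends one edge on its internal dual-rail edge $(i^{(k)},\tilde i^{(k)})$. Hence in a cycle of copies each rail qubit has degree exactly $3$ from the structure alone (internal edge plus two equality links), leaving no room for \emph{any} edge-qubit attachment; in a path, only the two end copies have a spare slot, i.e.\ one attachment per rail at each end, whereas a TIM vertex of degree $d$ needs $2d$ attachments on the $i$-rail and $2d$ on the $\tilde i$-rail (edge qubits $(i,j)$ and $(i,\tilde j)$ for each incident edge $ij$), so already $d=2$ is infeasible; and branching into a tree of copies fails identically, since a branch copy would need three equality links plus its internal edge, giving degree $4$. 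So ``distributing the edge-qubit connections around the cycle so that every physical qubit retains degree at most $3$'' is not achievable with the gadget as written. The flaw is repairable --- e.g., placing a \emph{single} directed covering edge $(i^{(k)},\tilde i^{(k+1)})$ per link around a cycle still forces all copies equal in the feasible covering subspace via the cyclic chain of one-way implications, and frees one slot per rail per copy --- but then the flip distance, the feasibility of every intermediate covering state, and the weighted path combinatorics must all be redone at an even higher perturbative order, none of which your sketch addresses.

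For comparison, the paper's proof avoids re-entering the TIM reduction altogether: it is a generic TVC-to-TVC degree reduction. A degree-$d>3$ vertex $v$ is replaced by a path $v_a,v_b,v_c$ with $v$'s edges split between $v_a$ and $v_c$ and covering edges $(v_a,v_b)$, $(v_b,v_c)$ added; the strong $1$-local term $H_0=-\frac{\Delta}{2}(Z_{v_a}+2Z_{v_b}+Z_{v_c})$ makes $\ket{010}$ and $\ket{101}$ the degenerate ground states within the covering subspace, encoding $\ket{0}$ and $\ket{1}$ of $v$; then $Z_v$ is recovered at first order via $\frac{1}{2}(Z_{v_a}+Z_{v_c})$ and $X_v$ at third order via $\Delta^{2/3}(X_{v_a}+X_{v_b}+X_{v_c})$, with only a second-order energy shift to correct, and the replacement is iterated a constant number of times in parallel to bring the maximum degree from $7$ down to $3$. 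This keeps the new perturbation order at a small fixed constant and entirely sidesteps the attachment-slot arithmetic on which your construction founders; if you want to salvage your route, you would need to fix the equality-gadget degree budget as above and then carry out the full higher-order Bloch analysis that the paper's three-qubit gadget renders unnecessary.
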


\begin{proof}
    Consider an instance of TVC where the maximum degree of the constraint graph is $d > 3$, and let $v$ be a qubit whose degree is equal to $d$ in the constraint graph. We will construct a new instance of TVC where qubit $v$ is replaced with $3$ qubits $v_a, v_b, v_c$ and split the edges that are connected to qubit $v$ between $v_a$ and $v_c$ while adding two additional edges: one between $v_a$ and $v_b$, and another between $v_b$ and $v_c$. Figure \ref{fig:tvc_degree_reduction} gives a pictorial representation of this process.
    \begin{figure}[!ht]
        \centering
        \captionsetup{justification=centering}
        \includegraphics{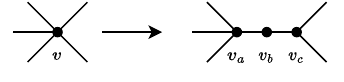}
        \caption{Gadget replacing a higher degree vertex with three vertices in a line each with a lower degree than the original vertex.}
        \label{fig:tvc_degree_reduction}
    \end{figure}

    To encode the qubit $v$, we add an unperturbed Hamiltonian term 
    \begin{align}
        H_0 = -\frac{\Delta}{2}\left(Z_{v_a}+2Z_{v_b}+Z_{v_c}\right)
    \end{align}
    whose ground states are $\ket{010}$ and $\ket{101}$ in the restricted covering Hilbert subspace which we map to $\ket{0}$ and $\ket{1}$ states of qubit $v$. We can encode $Z_v$ by adding a first-order perturbation term 
    $V_Z = \frac{1}{2}(Z_{v_a} + Z_{v_c})$, and we can encode $X_v$ by adding a third order perturbation term $V_X = \Delta^{2/3}(X_{v_a} + X_{v_b} + X_{v_c})$. Note that there will be an energy shift in the low energy eigenspace of $H_0$ due to second-order perturbative correction. Accounting for this shift, the ground state energy of this new instance of TVC can be made inverse polynomially close to the ground state energy of the given instance of TVC by having a large $\Delta$ that scales polynomially in the number of qubits. By repeatedly applying this procedure a constant number of times in parallel, we can reduce the maximum degree of TVC from 7 to 3. Therefore TVC is StoqMA-hard even when the maximum degree of the constraint graph is 3.
\end{proof}

\subsection{Transverse Prize Collecting Vertex Cover}
\begin{definition}
    \label{def:transverse_PCVC}
    We define the Transverse Prize Collecting Vertex Cover Hamiltonian from classical Prize Collecting Vertex Cover where single qubit projectors $\ket{1}\bra{1}$ are replaced by 1-local projectors as follows:
    \begin{align}
        H = \quad \sum_{i \in V} c_i \phi_i + \sum_{ij \in E} c_{ij} \frac{(I+Z_i)\otimes(I+Z_j)}{4} \label{eq:Ham_stoq_PCVC}
    \end{align}
where $\phi_i$ is 1-local projector on $i^{th}$ qubit with Tr$(Z_i \phi_i) \leq 0$, $c_{i} \geq 0\, \forall\, i \in V$ and $c_{ij} \geq 0\, \forall\, ij \in E$.
\end{definition} 
Similar to the case of TVC, we can assume that the 1-local projectors $\phi_i$ are stoquastic. About the complexity of computing the ground state energy, we can convert any anti-ferromagnetic TIM Hamiltonian into the above form, and therefore estimating the ground state energy is StoqMA-hard \cite{bravyi2017complexity}.
\section{A Quantum local ratio method}
\label{sec:QLR}

A natural question we consider is how well can we approximate Transverse Vertex Cover and Transverse Prize Collecting Vertex Cover problems, first with product states and in general with any quantum state. In this section, we generalize the well-known approximation algorithms based on the local ratio method to give product state approximations to the above problems.

We motivate the local ratio method in the quantum setting and illustrate it using classical Vertex Cover.  We start by specializing \Cref{def:constrained-LH} to the constrained 2-local Hamiltonian problem:

\begin{definition}[Constrained $2$-local Hamiltonian] For a $2$-local Hamiltonian $H \succeq 0$, let
\begin{align}\label{eq:cons-local-Ham}
\mu^*(H, \{C_{ij}\}) = \min\ &\tr[H\rho]\\
\label{ineq:LR-constraint_v1}
s.t.\qquad & \tr[C_{ij}\rho]=0\quad \forall ij \in E,\\
& \tr[\rho] = 1,\\
& \rho \succeq 0.
\end{align}
We use $\mu^*(H, \{C_{ij}\})$ to refer to both the problem above as well as its optimal value, and this distinction should be clear from the context.  
\end{definition}

For the problems we consider, the constraints will correspond to restriction to subspaces so that the $C_{ij}$ will be projectors.

\begin{definition}[$\alpha$-approximation] For $\alpha \in [1,\infty)$, the state $\rho$ is an \emph{$\alpha$-approximation} for the constrained local Hamiltonian problem \eqref{eq:cons-local-Ham} if
\begin{equation}
    \tr[H\rho] \leq \alpha \mu^*,
\end{equation}
where $\mu^* = \mu^*(H, \{C_{ij}\})$ is the optimal value. A polynomial-time algorithm producing a description of an $\alpha$-approximate state\footnote{A quantum approximation algorithm may instead opt to prepare the output state.} is called an \emph{$\alpha$-approximation algorithm}. The parameter $\alpha$ is called the \emph{approximation ratio} or \emph{approximation guarantee}.
\end{definition}

We seek to obtain an approximation algorithm with a good (i.e., relatively small) approximation ratio. We will accomplish this by decomposing $H$ into a linear combination, with nonnegative coefficients, of terms $H_{ij}$ acting on qubits $i$ and $j$.  These terms will be chosen so that constrained local Hamiltonian problem, $\mu^*(H_{ij}, \{C_{ij}\})$ becomes easy to solve when the objective $H$ is replaced with $H_{ij}$.  In fact $\mu^*(H_{ij}, \{C_{ij}\})$ will become an entirely \emph{local} problem, only depending on qubits $i$ and $j$.  We will use $\mu^*_{ij} = \mu^*(H_{ij}, \{C_{ij}\})$ for simplicity.  In particular, we seek to find coefficients $w_{ij} \geq 0$ so that
\begin{equation} \label{eq:QLR-decomp}
    H = R + \sum_{ij \in E} w_{ij} H_{ij},
\end{equation}
where $R \succeq 0$ is a ``remainder'' term.  The idea is then that any state that attains an $\alpha$-approximation on $R$ and each local $H_{ij}$ must be an $\alpha$-approximation for the global $H$.

\begin{theorem}[Local ratio] \label{thm:local_ratio}
Suppose there is a state $\rho$ such that    
\begin{gather}
\label{eq:thm-QLR1} \tr[R \rho] = 0,\text{ and }\\
\label{eq:thm-QLR2} \tr[H_{ij} \rho] \leq \alpha \mu^*_{ij},\ \forall ij \in E.
\end{gather}
Then $\rho$ is an $\alpha$-approximation for $H = R + \sum_{ij \in E} w_{ij} H_{ij}$, where $w_{ij} \geq 0$ for all $ij \in E$.
\end{theorem}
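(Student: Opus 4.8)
The plan is to prove the two-sided sandwich $\tr[H\rho] \le \alpha \sum_{ij\in E} w_{ij}\mu^*_{ij} \le \alpha\mu^*$, which immediately gives the claim. The argument mirrors the classical Vertex Cover analysis of the overview, with the hypothesis $\tr[R\rho]=0$ playing the role of $r(\hat{x})=0$ and the per-edge hypothesis \eqref{eq:thm-QLR2} playing the role of the local-ratio bound on each edge.

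First I would establish the lower bound $\mu^* \ge \sum_{ij\in E} w_{ij}\mu^*_{ij}$. Let $\rho^*$ be a minimizer of the global problem \eqref{eq:cons-local-Ham} (one exists since the objective is linear and the feasible set of density matrices satisfying the closed constraints $\tr[C_{ij}\rho]=0$ is compact). Substituting the decomposition \eqref{eq:QLR-decomp} into $\mu^* = \tr[H\rho^*]$ yields
\[\mu^* = \tr[R\rho^*] + \sum_{ij\in E} w_{ij}\,\tr[H_{ij}\rho^*].\]
Since $R\succeq 0$ and $\rho^*\succeq 0$ we have $\tr[R\rho^*]\ge 0$. Moreover, because each $C_{ij}$ is supported on qubits $i,j$, the reduced state $\rho^*_{ij}=\tr_{\neq ij}[\rho^*]$ is feasible for the local problem defining $\mu^*_{ij}$ (it is PSD, has unit trace, and satisfies $\tr[C_{ij}\rho^*_{ij}]=\tr[C_{ij}\rho^*]=0$), so $\tr[H_{ij}\rho^*]=\tr[H_{ij}\rho^*_{ij}]\ge\mu^*_{ij}$. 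Multiplying by $w_{ij}\ge 0$ and summing over $ij \in E$ gives the lower bound.

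Next I would establish the upper bound. Using $\tr[R\rho]=0$ from \eqref{eq:thm-QLR1} together with \eqref{eq:QLR-decomp},
\[\tr[H\rho] = \tr[R\rho] + \sum_{ij\in E} w_{ij}\,\tr[H_{ij}\rho] = \sum_{ij\in E} w_{ij}\,\tr[H_{ij}\rho].\]
Applying the per-edge hypothesis \eqref{eq:thm-QLR2} together with $w_{ij}\ge 0$ gives $\tr[H\rho]\le \alpha\sum_{ij\in E} w_{ij}\mu^*_{ij}$. Combining with the lower bound from the previous step, and using $\alpha>0$ to multiply the inequality $\sum_{ij\in E} w_{ij}\mu^*_{ij}\le\mu^*$ through by $\alpha$, I conclude $\tr[H\rho]\le\alpha\mu^*$.

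The step requiring the most care is the lower bound, specifically the observation that although $H$ and $R$ are assumed PSD, the individual terms $H_{ij}$ need not be, so $\mu^*_{ij}$ may well be negative. One must therefore track signs throughout: all the inequalities survive precisely because every weight $w_{ij}$ is nonnegative and $\alpha>0$, so multiplying and summing preserves their direction regardless of the sign of each $\mu^*_{ij}$. The secondary point worth spelling out cleanly is that the marginal $\rho^*_{ij}$ of a global optimizer is genuinely feasible for the decoupled local relaxation $\mu^*_{ij}$; this is what licenses lower-bounding the global optimum by a sum of independent local optima, and is exactly where the $2$-locality of the constraint operators $C_{ij}$ enters.
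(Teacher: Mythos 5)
Your proof is correct and follows essentially the same route as the paper's: expand $H$ via the decomposition \eqref{eq:QLR-decomp}, use $\tr[R\rho]=0$ and the per-edge hypothesis \eqref{eq:thm-QLR2} for the upper bound, and use feasibility of the global optimizer for each local problem, together with $\tr[R\rho^*]\geq 0$ from $R \succeq 0$, for the lower bound. If anything you are slightly more careful than the paper, which writes $\sum_{ij\in E} w_{ij}\tr[H_{ij}\rho^*] = \tr[H\rho^*]$ as an equality in its final step, where one actually needs the inequality $\sum_{ij\in E} w_{ij}\tr[H_{ij}\rho^*] \leq \tr[H\rho^*]$ supplied by $\tr[R\rho^*]\geq 0$ --- precisely the point you make explicit.
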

\begin{proof} Let $\rho^*$ be an optimal state for problem \eqref{eq:cons-local-Ham} so that $\tr[H \rho^*] = \mu^*$. We get
\begin{align}
\label{eq:QLR1} \tr[H\rho] &= 0 + \sum_{ij \in E} w_{ij} \tr[H_{ij} \rho]\\
\label{eq:QLR2} & \leq \alpha \sum_{ij \in E} w_{ij} \mu^*_{ij} \\
\label{eq:QLR3} & \leq \alpha \sum_{ij \in E} w_{ij} \tr[H_{ij} \rho^*]\\
\label{eq:QLR4} & = \alpha \tr[H \rho^*] = \alpha \mu^*. 
\end{align}
\Cref{eq:QLR1,eq:QLR2} follow from \cref{eq:thm-QLR1,eq:thm-QLR2}. The inequality \eqref{eq:QLR3} follows since the global optimal solution, $\rho^*$, is a feasible solution to the local problem $\mu^*(H_{ij}, \{C_{ij}\})$, so an optimal local solution must have objective at least as good as $\rho^*$. 
\end{proof}

\subsection{Local ratio algorithms for Transverse Vertex Cover problems}

How can we use \Cref{thm:local_ratio} to design approximation algorithms?  The key ingredient is selecting a local terms $H_{ij}$ on each edge $ij$ for which it is easy to obtain a good approximation (i.e., satisfying \cref{eq:thm-QLR2}).  Classically this is usually done by selecting an unweighted sum of the objective terms on $i$ and $j$, and we show that this strategy is also effective in the transverse case, though it takes more analysis to do so.  

Our quantum implementation of the local ratio method for Transverse Vertex Cover will construct a product state $\rho = \otimes_i \rho_i$. We first look for any qubit $i$ with $c_i = 0$, and in this case we may satisfy all constraints involving $i$ by setting $\rho_i = \ketbra{1}{1}_i$.  We say that all edges incident to $i$ are \emph{covered}.  We iteratively obtain a decomposition of $H$ into the form \eqref{eq:QLR-decomp} by picking an \emph{uncovered} edge $ij$ and subtracting $w_{ij} H_{ij}$ from $H$ for a weight $w_{ij} > 0$.  We pick $w_{ij}$ so that in $H' := H - w_{ij} H_{ij}$ either $c'_i = 0$ or $c'_j = 0$.  We then repeat this algorithm until all edges are covered, and $R$ is the portion of $H$ that remains.  All edges are covered at this point, and no constraints remain; we may simply set $\rho_i$ for all qubits $i$ on which $R$ acts so that the cost paid on $R$ is 0.

\begin{algorithm}[!h]
\setstretch{1.5}
\caption{Local ratio product state algorithm for Transverse Vertex Cover}
\label{alg:TVC}
\vspace{0.1in}
{\bf Given}: Graph $G(V,E)$ and a 1-local Hamiltonian $H = \sum_i c_i \phi_i$.\\
{\bf Assumptions}: $\phi_i$ is a rank-1 projector, $c_i \geq 0$, $\mathrm{Tr}[Z_i \phi_i] \leq 0\, \forall i \in V$.
\begin{algorithmic}[1]
\State For any $i \in V$ with $c_i =0$, set $\rho_i = \ketbra{1}{1}$
\State Choose an edge $ij \in E$ such that $\min\{c_i, c_j\} > 0$, and let $H_{ij} = \phi_i + \phi_j$
\State Update $H \rightarrow H - w_{ij}H_{ij}$ where $w_{ij} = \min\{c_i, c_j\}$

\State Repeat from step 1 until no more edges are left to choose in step 2.

\State For any remaining $i$ with $c_i > 0$, set $\rho_i = I-\phi_i$.
\end{algorithmic}
{\bf Output: } Product state $\rho = \otimes_i \rho_i$.
\vspace{0.075in}
\end{algorithm}


\LocalratioAlgForTVC*

\begin{proof}
\onote{We should also explain that we produce a feasible solution. This is described in the paragraph above the Algorithm, and we can be brief here.}

    To see that \Cref{alg:TVC} outputs a feasible solution for TVC, observe that at no point in the algorithm are we changing the constraint graph and in step 1 of each iteration of the algorithm we are setting at least one of $\rho_i$ or $\rho_j = \ketbra{1}{1}$ for the edge $ij$ picked in the previous iteration and the algorithm ends only after all the edge constraints are satisfied.

     For $ij\in E$, let $H_{ij} = \phi_i + \phi_j$ and let $\mathrm{Tr}[Z_i\phi_i] = \alpha$ and $\mathrm{Tr}[Z_j\phi_j] = 2k-\alpha$ where $-1 \leq \alpha \leq 0$ and $-1 \leq 2k-\alpha \leq 0$ without loss of generality. We may also assume $Tr[\phi_\ell Y_\ell]=0$ and $Tr[\phi_\ell X_\ell] \leq 0$ for $\ell\in \{i, j\}$ WLOG by the discussion in \Cref{sec:tvc_complexity}. By \Cref{thm:local_ratio}, it is enough to upper bound the ratio of the maximum expected value from our ansatz to the minimum objective of $H_{ij}$ subject to the constraints: $Tr[H_{ij} \rho]/\mu_{ij}^*\leq \alpha$.  Since the Vertex Cover constraint \cref{eq:tvc_const} forces $Tr[\ket{00}\bra{00}_{ij}\rho]=0$ we can compute $\mu_{ij}^*$ by optimizing $H_{ij}$ in the  subspace spanned by $\{\ket{01}, \ket{10}, \ket{11}\}$.  Denote $\widetilde{H_{ij}}$ as $H_{ij}$ restricted to this subspace.
    \begin{align}
        \widetilde{H_{ij}} =
        \begin{bmatrix}
            1+\alpha-k & 0 & -\frac{\sqrt{1-\alpha^2}}{2}\\
            0 & 1+k-\alpha & -\frac{\sqrt{1-(2k-\alpha)^2}}{2}\\
            -\frac{\sqrt{1-\alpha^2}}{2} & -\frac{\sqrt{1-(2k-\alpha)^2}}{2} & 1-k
        \end{bmatrix}.
    \end{align}
    The smallest eigenvalue is $\frac{1}{2}\left(-\sqrt{2 \alpha^2-4 \alpha k+k^2+2}-k+2\right)$ which is minimized with respect to $\alpha$ when $\alpha = 0$ or $2k$ and is equal to $\frac{1}{2}\left(-\sqrt{k^2+2}-k+2\right)$.

    To complete the analysis we need an upper bound on $Tr[H_{ij} \rho]$ where $\rho$ is the state obtained by the algorithm.  At the end of the algorithm, qubit-$i$ has two possible outcomes $I-\phi_i$ or $\ketbra{1}{1}$, and every pair of qubits on an edge $ij \in E$ have three possible outcomes $\left\{(I-\phi_i)\otimes \ketbra{1}{1}, \ketbra{1}{1}\otimes(I-\phi_j), \ketbra{1}{1}\otimes \ketbra{1}{1}\right\}$. Given that $\mathrm{Tr}[Z_i\phi_i] \leq 0 \, \forall i$, the maximum expectation value of $H_{ij}$ is with state $\ketbra{1}{1}\otimes \ketbra{1}{1}$ which is equal to 
    \begin{align}
        \mathrm{Tr}[H_{ij}\ketbra{1}{1}\otimes\ketbra{1}{1}] = 1 - \frac{(\mathrm{Tr}[Z_i\phi_i] + \mathrm{Tr}[Z_j\phi_j])}{2}=1-k.
    \end{align}
    For a fixed value of $k$, the approximation ratio is equal to
    \begin{align}
        \frac{2 - (\mathrm{Tr}[Z_i\phi_i] + \mathrm{Tr}[Z_j\phi_j])}{\left(-\sqrt{k^2+2}-k+2\right)} = \frac{2(1-k)}{\left(-\sqrt{k^2+2}-k+2\right)}
    \end{align}
    whose maximum value is equal to $2+\sqrt{2}$ which happens at $k=0$ in the range $-1 \leq k \leq 0$.
\end{proof}

\paragraph{Transverse Prize Collecting Vertex Cover} 
From \cref{def:transverse_PCVC}, the Transverse Prize Collecting Vertex Cover Hamiltonian is 
\begin{align}
    H = \quad \sum_{i \in V} c_i \phi_i + \sum_{ij \in E} c_{ij} \frac{(I+Z_i)\otimes(I+Z_j)}{4}
\end{align}
where $\phi_i$ is a 1-local projector with $\tr[Z_i\phi_i] \leq 0$, $c_i \geq 0\, \forall i \in V$ and $c_{ij} \geq 0\, \forall ij\in E$.
Our quantum local ratio implementation in this case is similar to \Cref{alg:TVC}. However, we have an unconstrained local Hamiltonian, so $H_{ij}$ contains both 1- and 2-local terms and needs to be chosen more carefully to obtain better approximation ratios.

\begin{algorithm}[!h]
\setstretch{1.5}
\caption{Local ratio product state algorithm for Transverse Prize Collecting Vertex Cover}
\label{alg:SPCVC}
\vspace{0.1in}
{\bf Given}: Graph $G(V,E)$ and Hamiltonian $H = \sum_i c_i \phi_i + \sum_{ij \in E} c_{ij} \frac{(I+Z_i)\otimes(I+Z_j)}{4} $.\\
{\bf Assumptions}: $\phi_i$ is a rank-1 projector, $c_i \geq 0$, $\mathrm{Tr}[Z_i \phi_i] \leq 0\, \forall i \in V$ and $c_{ij} \geq 0\, \forall\, ij \in E$.
\begin{algorithmic}[1]
\State For any $i \in V$ with $c_i =0$, set $\rho_i = \ketbra{1}{1}$
\State Choose an edge $ij \in E$ such that $\min\{c_i, c_j, c_{ij}\} > 0$, and let $H_{ij} = \phi_i + \phi_j + \lambda \frac{(I+Z_i)\otimes(I+Z_j)}{4}$ where $\lambda = \frac{2}{1-\text{Tr}(Z_i\phi_i)} + \frac{2}{1-\text{Tr}(Z_j\phi_j)}$
\State Update $H \rightarrow H - w_{ij}H_{ij}$ where $w_{ij} = \min\{c_i, c_j, \frac{c_{ij}}{\lambda}\}$

\State Repeat from step 1 until no more edges are left to choose in step 2.

\State For any remaining $i$ with $c_i > 0$, set $\rho_i = I-\phi_i$.
\end{algorithmic}
{\bf Output: } Product state $\rho = \otimes_i \rho_i$.
\vspace{0.075in}
\end{algorithm}

\LocalratioAlgForTPCVC*

\begin{proof}
    Let 
    \begin{align}
        H_{ij} = \phi_i + \phi_j + \lambda \frac{(I+Z_i)\otimes(I+Z_j)}{4} \text{ where } \lambda = \frac{2}{1-\text{Tr}(Z_i\phi_i)} + \frac{2}{1-\text{Tr}(Z_j\phi_j)}.
    \end{align}
    Since Transverse Prize Collecting Vertex Cover has no constraints, by theorem \ref{thm:local_ratio} it is enough upper bound the ratio of the maximum expected value from our ansatz to the smallest eigenvalue of $H_{ij}$, $\mu_{ij}^*$.
    
    At the end of the algorithm, every qubit-$i$ has two possible outcomes $I-\phi_i$ or $\ketbra{1}{1}$, and every pair of qubits $(i,j)$ has four possible outcomes. The expectation value $H_{ij}$ with respect to state $\ketbra{1}{1}\otimes \ketbra{1}{1}$ is greater than or equal to $\ketbra{1}{1}\otimes (I-\phi_j)$ and  $(I-\phi_i)\otimes \ketbra{1}{1}$. So we only need to consider $(I-\phi_i)\otimes (I-\phi_j)$ and $\ketbra{1}{1}\otimes \ketbra{1}{1}$ to calculate the maximum expected value for $H_{ij}$. But $\lambda$ is chosen such that the expectation value of $H_{ij}$ with respect to both the states is equal to each other which is
    \begin{align}
        \mathrm{Tr}[H_{ij}(I-\phi_i)\otimes (I-\phi_j)] = \mathrm{Tr}[H'\ketbra{1}{1}\otimes\ketbra{1}{1}] = 1 - \frac{(\mathrm{Tr}[Z_i\phi_i] + \mathrm{Tr}[Z_j\phi_j])}{2}.
    \end{align}
    The ratio of $1 - \frac{(\mathrm{Tr}[Z_i\phi_i] + \mathrm{Tr}[Z_j\phi_j])}{2}$ to the minimum eigenvalue of $H'$ is a monotonically increasing function of  $\mathrm{Tr}[Z_i\phi_i]$ and $\mathrm{Tr}[Z_j\phi_j]$ and therefore is maximum when $\mathrm{Tr}[Z_i\phi_i] = \mathrm{Tr}[Z_j\phi_j] = 0$ and is equal to $\approx 4.19387.. < 4.194$ obtained numerically.
\end{proof}


\section{EVC with entangled constraints is in $P$}
\label{sec:entangled_VC}

Recall from the constrained local Hamiltonian problem~\eqref{eq:stoq_vc} in \Cref{sec:tvc_complexity} that both classical and Transverse Vertex Cover have the constraints:
\begin{equation}
\tr[\ketbra{00}{00}_{ij} \rho]=0\,\,\,\,\, \forall ij\in E.
\end{equation}

We can further generalize these constraints by allowing arbitrary rank-$1$ constraints $\ket{\psi} \bra{\psi}_{ij}$, instead of $\ket{00}\bra{00}_{ij}$, where $\ket{\psi}$ is $2$-qubit state.  We may also allow for generic $1$-local objectives without restriction. Like both classical and Transverse Vertex Cover, we will further assume that $\ket{\psi}\bra{\psi}$ is SWAP invariant, to capture the input graph is undirected in those problems.  
\begin{definition}[\cqvc $(\{\phi_i\}, G, \ket{\psi})$]
    Given a set of $1$-local terms $\{\phi_i\}_{i=1}^n$, a graph $G=([n], E)$, and a (non-zero) $2-$qubit quantum state $\ket{\psi}\in (\mathbb{C}^2)^{\otimes 2}$ satisfying $SWAP\ket{\psi}\bra{\psi} (SWAP)^\dagger = \ket{\psi}\bra{\psi}$ define
\begin{align}\label{eq:1}
EVC(\{\phi_i\}, G, \ket{\psi}):=\,\,\,\,\,\,\,\,\,\,\,\,\,\,\,\,\,\,\,\,\,\, &\min \sum_i Tr[\phi_i \rho]\\
\label{eq:cqvc_const} s.t.\,\,\,\,\,\,\,\,\,\,\, &Tr[\ket{\psi}\bra{\psi}_{ij}\rho]=0 \,\,\,\,\, \forall ij\in E,\\
 &Tr(\rho)=1,\\
 &\rho \semigeq 0.
\end{align}
\end{definition}

The assumed SWAP invariance places strong restrictions on the form of $\ket{\psi}$ up to local unitary.  Elementary considerations force $\ket{\psi}$ to be of a simple form. The following characterization is well-known and implicit in several works, we provide a proof for completeness.
\begin{proposition}[\cite{horodecki1996information}]\label{prop:const_class}
    $EVC(\{\phi_i\}, G, \ket{\psi})$ is equivalent to $EVC(\{\phi_i'\}, G, \ket{\psi'})$ where $\ket{\psi'}=\alpha\ket{00}+\beta \ket{11}$ for $\alpha, \beta$ non-negative real numbers or $\ket{\psi'}=\ket{01}- \ket{10}$.
\end{proposition}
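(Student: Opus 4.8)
The plan is to first use the $SWAP$ symmetry to pin down $\ket{\psi}$ to one of two subspaces, then bring each to the claimed form via a single-qubit unitary applied identically to every vertex, and finally check that such a uniform local rotation yields an equivalent $EVC$ instance. For the first step, note that $SWAP$ is both Hermitian and unitary with $SWAP^2 = I$, so $SWAP\,\ket{\psi}\bra{\psi}\,SWAP = (SWAP\ket{\psi})(SWAP\ket{\psi})^\dagger$, and equality of rank-$1$ projectors forces $SWAP\ket{\psi} = \lambda \ket{\psi}$ with $|\lambda| = 1$. Since $SWAP^2 = I$ we get $\lambda = \pm 1$, so $\ket{\psi}$ lies either in the symmetric subspace (spanned by $\ket{00}$, $\ket{11}$, $\tfrac{1}{\sqrt2}(\ket{01}+\ket{10})$) or in the one-dimensional antisymmetric subspace (spanned by $\ket{01}-\ket{10}$).

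The antisymmetric case is immediate: $\ket{\psi} \propto \ket{01}-\ket{10}$ already has the stated form, and since rescaling $\ket{\psi}$ leaves the constraint $\tr[\ket{\psi}\bra{\psi}_{ij}\rho]=0$ unchanged, we may take $\ket{\psi'} = \ket{01}-\ket{10}$. For the symmetric case I would write $\ket{\psi} = \sum_{ij} A_{ij}\ket{ij}$ with coefficient matrix $A = A^T$, and observe that conjugating by a uniform local unitary acts as $A \mapsto V A V^T$, since $(V\otimes V)\ket{\psi} = \sum_{kl}(VAV^T)_{kl}\ket{kl}$. The key tool is the Autonne--Takagi factorization: every complex symmetric matrix can be written $A = U\Sigma U^T$ with $U$ unitary and $\Sigma = \mathrm{diag}(\sigma_0, \sigma_1)$ having non-negative real entries. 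Taking $V = U^\dagger$ and using $U^T\bar U = I$ gives $VAV^T = \Sigma$, so $(V\otimes V)\ket{\psi} = \sigma_0\ket{00}+\sigma_1\ket{11}$; setting $\alpha = \sigma_0$ and $\beta = \sigma_1$ yields the claimed $\ket{\psi'}$. The essential point is that $SWAP$-symmetry lets us diagonalize with the \emph{same} unitary on both tensor factors; the ordinary Schmidt decomposition would generally require distinct bases $V\otimes W$ and would not correspond to a single per-vertex rotation.

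Finally, I would verify the equivalence of instances. The map $\rho \mapsto \rho' := V^{\otimes n}\rho\, (V^{\otimes n})^\dagger$ is a bijection on density matrices preserving $\tr[\rho]=1$ and $\rho \succeq 0$. Since $\phi_i$ acts only on qubit $i$, we have $\tr[\phi_i \rho] = \tr[(V\phi_i V^\dagger)_i\, \rho']$, so the objective becomes $\sum_i \tr[\phi_i' \rho']$ with $\phi_i' = V\phi_i V^\dagger$; likewise $\tr[\ket{\psi}\bra{\psi}_{ij}\rho] = \tr[\ket{\psi'}\bra{\psi'}_{ij}\,\rho']$ with $\ket{\psi'} = (V\otimes V)\ket{\psi}$, and because the same $V$ is applied at every vertex, the constraint set is preserved edge-by-edge. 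Hence the two optimization problems have identical feasible values and optima, establishing the equivalence. The only nontrivial ingredient is the Takagi factorization in the symmetric case; the rest is bookkeeping, and the main point to watch is keeping the local unitary uniform across vertices so that a single transformed state $\ket{\psi'}$ works simultaneously on all edges.
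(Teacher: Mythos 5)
Your proof is correct and follows essentially the same route as the paper's: split into the symmetric and antisymmetric eigenspaces of $SWAP$, apply the Takagi (Autonne--Takagi) factorization of the symmetric coefficient matrix to reduce to $\alpha\ket{00}+\beta\ket{11}$ via a uniform single-qubit unitary, and absorb that unitary into the objective terms $\phi_i$. Your write-up is, if anything, slightly more explicit than the paper's in deriving $SWAP\ket{\psi}=\pm\ket{\psi}$ from invariance of the projector and in verifying the instance equivalence edge-by-edge, but these are bookkeeping details rather than a different approach.
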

\begin{proof}
    First note that, since $SWAP$ is Hermitian, it must be that $SWAP\ket{\psi}=\pm \ket{\psi}$. If $SWAP \ket{\psi}=-\ket{\psi}$ then it must be that $\ket{\psi}=\ket{01}-\ket{10}$ as this is the only vector in the antisymmetric subspace for $2$ qubits (up to normalization and phase).  Otherwise we may assume $\ket{\psi}$ is in the symmetric subspace and can be written as $\ket{\psi}=\alpha \ket{00} +\beta (\ket{01}+\ket{10}) +\gamma \ket{11}$.  
    Note that for any single qubit unitary $U$ we may take $\ket{\psi'}=U\otimes U \ket{\psi}$ and obtain an equivalent problem simply by rotating the objective function accordingly (if $\ket{\psi} \rightarrow U^{\otimes 2}\ket{\psi}$ then $\phi_i\rightarrow U^\dagger \phi_i U$ for all $i$).  We also note that the action of $U$ on $\ket{\psi}$ can be simply described as follows.
    Define 
    \begin{equation}
        M=\begin{bmatrix}
            \alpha & \beta \\
            \beta & \gamma
        \end{bmatrix}.
    \end{equation}
    Let $U\in \mathbb{C}^{2\times 2}$ be an arbitrary unitary matrix and let $\ket{\psi'}:= U^{\otimes 2} \ket{\psi}=\alpha' \ket{00}+\beta'(\ket{01}+\ket{10})+\gamma' \ket{11}$.  It is easily verified that 
    \begin{equation}
        \begin{bmatrix}
            \alpha' & \beta'\\
            \beta' & \gamma'
        \end{bmatrix}
        =U M U^T.
    \end{equation}
    $M$ is complex symmetric (not Hermitian) and so by the Takagi decomposition \cite{horn2012matrix}, we may find a unitary matrix $U$ such that $U M U^T$ is a diagonal matrix with non-negative entries.  It follows that we may find a unitary such that $U^{\otimes 2}\ket{\psi}=\alpha \ket{00}+\beta \ket{11}$ and so by rotating the objective we can obtain an equivalent problem with $\ket{\psi}=\alpha\ket{00}+\beta\ket{11}$.  
\end{proof}

Counter-intuitively, while \cqvc{} appears to be a well-motivated ``quantum'' generalization of vertex cover, it is polynomial time computable on a classical computer whenever the constraint is entangled.  As stated in the introduction this fact is implicit in previous works \cite{de2010ground, ji2011complete}, we provide a simplified proof relevant to \cqvc{}.

\begin{theorem}[\cite{de2010ground, ji2011complete}]\label{thm:cvqc}
    If $\ket{\psi}$ is not local unitary equivalent to $\ket{00}$ then $EVC(\{\phi_i\}, G, \ket{\psi})$ can be evaluated in polynomial time.  
\end{theorem}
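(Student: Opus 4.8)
The plan is to recognize the feasible region as the set of density matrices supported on the satisfying subspace of a frustration-free quantum $2$-SAT instance, and to exploit the fact that \emph{entangled} uniform constraints force this subspace to be only polynomially large. By \Cref{prop:const_class} I would first reduce to the two canonical entangled forms $\ket{\psi'}=\alpha\ket{00}+\beta\ket{11}$ with $\alpha,\beta>0$, or the singlet $\ket{\psi'}=\ket{01}-\ket{10}$; the hypothesis that $\ket\psi$ is not local-unitary equivalent to $\ket{00}$ is exactly what rules out the product cases $\alpha\beta=0$. Next, since $\ketbra{\psi}{\psi}_{ij}\semigeq 0$ and $\rho\semigeq 0$, the constraint $\tr[\ketbra{\psi}{\psi}_{ij}\rho]=0$ is equivalent to $\mathrm{supp}(\rho)\subseteq\ker(\ketbra{\psi}{\psi}_{ij})$. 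Hence the feasible set is precisely the density matrices supported on $\mathcal F:=\bigcap_{ij\in E}\ker(\ketbra{\psi}{\psi}_{ij})$, the common satisfying subspace; a nonempty $\mathcal F$ makes the instance frustration-free, while an empty $\mathcal F$ is an infeasible instance (detectable in polynomial time). Because the objective $\sum_i\tr[\phi_i\rho]$ is linear in $\rho$, its minimum over density matrices supported on $\mathcal F$ is attained at a pure state and equals the smallest eigenvalue of the compression $P_{\mathcal F}\big(\sum_i\phi_i\big)P_{\mathcal F}$ to $\mathcal F$. Thus the whole problem reduces to a minimal-eigenvalue computation on $\mathcal F$.

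The heart of the argument is to show $\dim\mathcal F=\mathrm{poly}(n)$ together with an efficiently computable basis, for which I would use Bravyi's transfer-matrix method for quantum $2$-SAT~\cite{bravyi2011efficient}. For an entangled edge constraint, orthogonality of a product factor $\ket{v_i}$ to $\ket\psi$ determines $\ket{v_j}$ up to scale via a fixed $2\times2$ transfer matrix, e.g.\ $M=\left(\begin{smallmatrix}0&-\beta\\\alpha&0\end{smallmatrix}\right)$ for $\alpha\ket{00}+\beta\ket{11}$, so along any path the single-qubit states are forced and each feasible product state $\ket{\Phi_a}$ is determined by the choice $\ket a$ at one root vertex. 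Since $M^2\propto I$, propagation is consistent around even cycles but around odd cycles requires $\ket a$ to be an eigenvector of $M$; this is exactly where the \emph{bipartiteness} of $G$ enters, as the cited characterization in the introduction anticipates. The crucial observation is that every tensor factor of $\ket{\Phi_a}$ is \emph{linear} in the two components of $\ket a$, so the coordinates of $\ket{\Phi_a}$ are homogeneous degree-$n$ polynomials in those two variables, whence their span has dimension at most $n+1$. Invoking the result of~\cite{ji2011complete,de2010ground} that the ground space of a frustration-free $2$-local Hamiltonian is spanned by product states, I conclude $\mathcal F=\mathrm{span}\{\ket{\Phi_a}\}$ has $\dim\mathcal F\le n+1$ for bipartite connected $G$ and $O(1)$ for non-bipartite connected $G$.

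The singlet case is even more direct. Here $\tr[\ketbra{\psi}{\psi}_{ij}\rho]=0$ says $\rho$ is invariant under $SWAP_{ij}$, and since the transpositions indexed by the edges of a connected graph generate the full symmetric group, $\mathcal F$ is the totally symmetric subspace, of dimension $n+1$; so the same polynomial bound holds.

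Finally, having a polynomial-size product-state basis for $\mathcal F$ (computable in polynomial time by Bravyi's algorithm), I would assemble the Gram matrix $\langle\Phi_a|\Phi_b\rangle$ and the objective matrix $\langle\Phi_a|\sum_i\phi_i|\Phi_b\rangle$; both factor over qubits because the $\ket{\Phi_a}$ are product states and each $\phi_i$ is $1$-local, so every entry is a product of single-qubit inner products evaluated in $O(n)$ time. Solving the resulting $\mathrm{poly}(n)\times\mathrm{poly}(n)$ generalized eigenvalue problem yields $\min_{\ket\phi\in\mathcal F}\langle\phi|\sum_i\phi_i|\phi\rangle$, the optimal value. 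The main obstacle is the dimension bound on $\mathcal F$: one must argue rigorously that product states span the satisfying subspace and that the transfer-matrix propagation, with its bipartite/odd-cycle dichotomy, captures all of them, which is precisely what the cited frustration-free structure theorems supply.
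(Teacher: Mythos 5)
Your overall architecture is sound, and parts of it are genuinely different from (and in one place cleaner than) the paper's proof: your singlet case is self-contained and correct, since orthogonality to the singlet on edge $ij$ is exactly $SWAP_{ij}$-invariance, and edge transpositions of a connected graph generate the full symmetric group, so $\mathcal{F}$ is the symmetric subspace of dimension $n+1$. The paper instead reaches this conclusion by repeatedly applying \Cref{cor:bravyi} to derive $\epsilon$-edges between all pairs. Your final step (Gram matrix plus objective matrix of a polynomial-size product-state family, then a generalized eigenvalue problem) is also a valid and efficient way to finish once a spanning set is in hand; the paper instead works in an orthonormal Hamming-weight basis and computes matrix elements directly.

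The genuine gap is in the case $\ket{\psi}=\alpha\ket{00}+\beta\ket{11}$, at the step where you conclude $\mathcal{F}=\mathrm{span}\{\ket{\Phi_a}\}$. Your transfer-matrix construction only establishes the inclusion $\mathrm{span}\{\ket{\Phi_a}\}\subseteq\mathcal{F}$ together with $\dim\mathrm{span}\{\ket{\Phi_a}\}\leq n+1$; the reverse inclusion is the entire content of the dimension bound, and you justify it by invoking ``the result of~\cite{ji2011complete,de2010ground} that the ground space of a frustration-free $2$-local Hamiltonian is spanned by product states.'' No such theorem exists in that form. The correct statement of Ji--Wei--Zeng is that the ground space is spanned by \emph{tree tensor network} states (equivalently, Bravyi's result gives a single satisfying state that is a product of one- and two-qubit states); fully product states do not suffice in general. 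A minimal counterexample: on two qubits take the frustration-free $2$-local Hamiltonian $\frac{1}{2}(I+SWAP)$, whose ground space is the one-dimensional span of the singlet and contains no product state at all. Since your argument produces multiple effective constraints per pair of qubits (exactly as the paper's derived $\psi$- and $\epsilon$-edges do), you cannot rule out a priori that $\mathcal{F}$ contains states outside the product-state span without further work. The paper closes precisely this hole by running the transfer-matrix machinery in the opposite direction: \Cref{lem:bravyi} and \Cref{cor:bravyi} are applied to an \emph{arbitrary} feasible state $\ket{\gamma}$ to derive new two-local constraints it must satisfy ($\psi$ on all cross pairs, $\epsilon$ within each part, and in the non-bipartite case both on all pairs), and these constraints directly force the amplitude structure $\gamma_{\mathbf{x},\mathbf{y}}$ to depend only on Hamming weights and to satisfy the recursion $\gamma_{a,b}=(-\alpha/\beta)\gamma_{a-1,b-1}$, pinning $\mathcal{F}$ inside an explicit $(n+1)$-dimensional (respectively two-dimensional) subspace. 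To repair your proof you would either reproduce this constraint-propagation argument for arbitrary feasible states, or correctly invoke the tree-tensor-network characterization and show that in this setting those states collapse to your $\ket{\Phi_a}$; as written, the key inclusion is asserted on the basis of a misquoted theorem.
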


\Cref{prop:const_class} implies that up to unitary rotation the constraint state $\ket{\psi}$ can take only two cases corresponding to two forms that $\ket{\psi}$ can take. Also, as we will show, there are only two cases for the graph $G$ which need to be studied to determine the complexity of \cqvc{}.  If $G$ is connected and bipartite then any other $G$ which is bipartite with the same partition of qubits will have the same feasible space.  In other words, the actual edges between the components of the partition do not affect the complexity of the feasible region as long as the graph is connected.  Similarly if the graph is not bipartite then the complexity of the feasible region of \cqvc{} will be the same as any other instance with the same number of qubits.  So the proof falls into $4$ cases and we demonstrate that \cqvc{} is solvable in each of these cases in turn.


\subsection{Proof of \cref{thm:cvqc}}

We begin with a few simple observations.  First note that the optimal $\rho$ can be assumed pure WLOG.  Indeed if $\rho=\sum_k \lambda_k \ket{\alpha_k}\bra{\alpha_k}$ then $0=Tr[\ket{\psi}\bra{\psi}_{ij} \rho]=\sum_k \lambda_k \bra{\alpha_k} \cdot \ket{\psi}\bra{\psi}_{ij} \cdot \ket{\alpha_k}$.  Each term in the sum is non-negative so each term must be zero.  The objective is $Tr[\sum_i \phi_i \rho]=\sum_k \lambda_k \bra{\alpha_k} \sum_i \phi_i \ket{\alpha_k}$.  The objective achieved by $\rho$ is a convex combination of the objectives achieved by pure states, hence at least one of them must have objective matching $\rho$.  We can also assume the graph $G$ is connected since if the graph is not connected we can solve \cqvc{} on each connected component and add up the optimal objectives.  

Now we turn to analyzing the constraints.  Let $\ket{\gamma}$ be an optimal state and express it in the computational basis:
\begin{equation}
    \ket{\gamma}=\sum_{\alpha_1, ..., \alpha_n=0}^1 \gamma_{\alpha_1, ..., \alpha_n} \ket{\alpha_1, ..., \alpha_n}.
\end{equation}
$\gamma_{\alpha_1, ..., \alpha_n}$ is an n-index tensor and constraints from \Cref{eq:cqvc_const} translate to tensor contractions:
\begin{equation}
    0=Tr[\ket{\psi}\bra{\psi}_{12} \ket{\gamma} \bra{\gamma}] \Leftrightarrow \sum_{\alpha_1, \alpha_2} \psi_{\alpha_1, \alpha_2}\gamma_{\alpha_1, \alpha_2, ..., \alpha_n}=0 \,\,\,\,\,\,\, \forall \alpha_3, ..., \alpha_n.
\end{equation}
In the remainder of the paper we will use Einstein summation convention and not explicitly include the sum.  The proof strategy will be to use the constraints from \Cref{eq:cqvc_const} to derive new $2-$local constraints not contained in the problem description.  For this we will employ the transfer matrix method pioneered by Bravyi:  
\begin{lemma}[\cite{bravyi2011efficient}]\label{lem:bravyi}
Let $\gamma_{\alpha_1, \alpha_2, ..., \alpha_n}$ be some tensor corresponding to a quantum state and let $\phi_{\alpha_1, \alpha_2}$ and $\theta_{\alpha_2, \alpha_3}$ be two tensors which satisfy:
\begin{equation}
\phi_{\alpha_1, \alpha_2} \gamma_{\alpha_1, ..., \alpha_n}=0 \,\,\,\,\,\,\,\, \theta_{\alpha_2, \alpha_3} \gamma_{\alpha_1, ..., \alpha_n}=0
\end{equation}
Then if we define $\omega_{\alpha_1, \alpha_3}=\phi_{\alpha_1, \beta}\epsilon_{\beta, \gamma}\theta_{\gamma, \alpha_3}$ it holds that $\omega_{\alpha_1, \alpha_3}\gamma_{\alpha_1, ..., \alpha_n}=0$.
\end{lemma}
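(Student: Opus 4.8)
The statement is a purely tensor-algebraic identity, so the plan is to prove it by a short index manipulation that chains the two hypotheses together through the Levi-Civita symbol $\epsilon$; notably, no property of $\gamma$ beyond being an arbitrary $n$-index tensor will be needed. First I would reduce to three qubits. Both hypotheses and the desired conclusion involve only the slots $\alpha_1,\alpha_2,\alpha_3$, and the hypotheses $\phi_{\alpha_1\alpha_2}\gamma_{\alpha_1\dots\alpha_n}=0$ and $\theta_{\alpha_2\alpha_3}\gamma_{\alpha_1\dots\alpha_n}=0$ hold for \emph{every} fixed value of the remaining indices $\alpha_4,\dots,\alpha_n$. Fixing such a value and abbreviating $T_{\alpha_1\alpha_2\alpha_3} := \gamma_{\alpha_1\alpha_2\alpha_3,\alpha_4,\dots,\alpha_n}$, it suffices to show, for each fixed middle index $\alpha_2$, that $\omega_{\alpha_1\alpha_3}T_{\alpha_1\alpha_2\alpha_3}=0$, where I may use as hypotheses $\phi_{\alpha_1\alpha_2}T_{\alpha_1\alpha_2\alpha_3}=0$ for all $\alpha_3$ (call this constraint A) and $\theta_{\alpha_2\alpha_3}T_{\alpha_1\alpha_2\alpha_3}=0$ for all $\alpha_1$ (constraint B). The reduction is clean because A and B are available at the fixed rest-value for both values of their free summed-out neighbor.

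Next I would make the $\epsilon$ contraction explicit. Writing $\epsilon$ in components ($\epsilon_{01}=1$, $\epsilon_{10}=-1$, diagonal zero) gives the determinant-like combination $\omega_{\alpha_1\alpha_3}=\phi_{\alpha_1 0}\,\theta_{1\alpha_3}-\phi_{\alpha_1 1}\,\theta_{0\alpha_3}$. For a fixed middle value $\alpha_2$, exactly one of the two resulting terms has $\phi$'s qubit-$2$ label agreeing with $\alpha_2$ and the other disagrees. I would apply constraint B \emph{once} to the disagreeing term: summing B against the appropriate slice of $\phi$ lets me trade the contraction $\sum_{\alpha_3}\theta_{\cdot\alpha_3}T_{\alpha_1\alpha_2\alpha_3}$ for its partner with $T$'s middle index flipped to $\overline{\alpha_2}$ (the complementary value) and $\theta$'s first index flipped correspondingly. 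After this single substitution the two surviving terms carry a \emph{common} $\theta$ factor, and the remaining $\phi$–$T$ part assembles into $\sum_{\alpha_1}\big(\phi_{\alpha_1 0}T_{\alpha_1 0\alpha_3}+\phi_{\alpha_1 1}T_{\alpha_1 1\alpha_3}\big)$, which is precisely constraint A and hence vanishes for every $\alpha_3$. This gives $\omega_{\alpha_1\alpha_3}T_{\alpha_1\alpha_2\alpha_3}=0$; the two parities of $\alpha_2$ are handled by the same argument with the roles of the two terms interchanged.

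The main obstacle, such as it is, is bookkeeping the shared middle index $\alpha_2$: one must verify that the telescoping through $\epsilon$ lands exactly on the two hypotheses rather than on some unrelated contraction. The key structural point worth highlighting is that the specific antisymmetric form of $\epsilon$ is what makes this work—it contributes no diagonal ($\phi_{\cdot 0}\theta_{0\cdot}$ or $\phi_{\cdot 1}\theta_{1\cdot}$) terms and opposite signs on the off-diagonal, so that after invoking B the leftover $\phi$–$T$ contraction closes up into constraint A with the correct relative sign. Any other $2\times 2$ bridging matrix in place of $\epsilon$ would leave a residual diagonal contribution and the chaining would fail, which is exactly why Bravyi's transfer method singles out the singlet tensor $\epsilon$.
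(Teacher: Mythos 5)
Your proof is correct: the reduction to a three-index tensor $T_{\alpha_1\alpha_2\alpha_3}$ at fixed $\alpha_4,\dots,\alpha_n$, the expansion $\omega_{\alpha_1\alpha_3}=\phi_{\alpha_1 0}\theta_{1\alpha_3}-\phi_{\alpha_1 1}\theta_{0\alpha_3}$, and the single application of constraint B to the term whose $\phi$-label disagrees with the fixed $\alpha_2$ (after which the two surviving terms factor through a common $\theta$-slice and close up into constraint A, with the sign from $\epsilon$ cancelling the sign incurred in moving a term across B) all check out. The paper states this lemma without proof, citing Bravyi, and your index manipulation is precisely the standard transfer-matrix computation behind that citation, so there is nothing to reconcile.
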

The precise statement we need is a simple corollary of this fact.  This corollary has been used in other works, and our case of interest is a special case of the result established there.
\begin{corollary}[\cite{ji2011complete, de2010ground}]\label{cor:bravyi}
    Let $\gamma_{\alpha_1, ..., \alpha_n}$ be an n-index tensor, let $\ket{\epsilon}=\ket{01}-\ket{10}$ and let $\ket{\psi}=\alpha\ket{00}+\beta\ket{11}$ with $\alpha\neq 0 \neq \beta$.
    \begin{enumerate}
        \item \label{eq:const_plus_const} If $\psi_{\alpha_i \alpha_j}\gamma_{\alpha_1, ..., \alpha_n}=0$ and $\psi_{\alpha_j \alpha_k}\gamma_{\alpha_1, ..., \alpha_n}=0$ then $\epsilon_{\alpha_i \alpha_k} \gamma_{\alpha_1, ..., \alpha_n}=0$.
        \item \label{eq:const_plus_eps} If $\epsilon_{\alpha_i \alpha_j}\gamma_{\alpha_1, ..., \alpha_n}=0$ and $\psi_{\alpha_j \alpha_k}\gamma_{\alpha_1, ..., \alpha_n}=0$ then $\psi_{\alpha_i \alpha_k} \gamma_{\alpha_1, ..., \alpha_n}=0$.
        \item \label{eq:eps_plus_eps} If $\epsilon_{\alpha_i \alpha_j}\gamma_{\alpha_1, ..., \alpha_n}=0$ and $\epsilon_{\alpha_j \alpha_k}\gamma_{\alpha_1, ..., \alpha_n}=0$ then $\epsilon_{\alpha_i \alpha_k} \gamma_{\alpha_1, ..., \alpha_n}=0$.
    \end{enumerate}
\end{corollary}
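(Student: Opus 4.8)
The plan is to read each of the three cases as a single instance of \Cref{lem:bravyi}, applied to the three indices $i,j,k$ with the shared index $\alpha_j$ playing the role of the middle index $\alpha_2$. First I would identify the relevant $2$-qubit constraint tensors with $2\times 2$ matrices: for $\ket{\psi}=\alpha\ket{00}+\beta\ket{11}$ we have the diagonal matrix $\psi=\bigl[\begin{smallmatrix}\alpha & 0\\ 0 & \beta\end{smallmatrix}\bigr]$, and for $\ket{\epsilon}=\ket{01}-\ket{10}$ we have the antisymmetric matrix $\epsilon=\bigl[\begin{smallmatrix}0 & 1\\ -1 & 0\end{smallmatrix}\bigr]$, which satisfies $\epsilon^2=-\mathbb{I}$. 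Under this identification the new tensor $\omega_{\alpha_i\alpha_k}=\phi_{\alpha_i\beta}\,\epsilon_{\beta\gamma}\,\theta_{\gamma\alpha_k}$ produced by \Cref{lem:bravyi} is precisely the matrix product $\phi\,\epsilon\,\theta$, so each case reduces to one small matrix computation.

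Carrying this out, in \Cref{eq:const_plus_const} I take $\phi=\theta=\psi$ and compute
\begin{equation}
\psi\,\epsilon\,\psi=\alpha\beta\,\epsilon,
\end{equation}
so the derived constraint is a nonzero multiple of $\epsilon_{\alpha_i\alpha_k}$. In \Cref{eq:const_plus_eps} I take $\phi=\epsilon$, $\theta=\psi$ and use $\epsilon^2=-\mathbb{I}$ to get $\epsilon\,\epsilon\,\psi=-\psi$, yielding the constraint $\psi_{\alpha_i\alpha_k}\gamma=0$. In \Cref{eq:eps_plus_eps} I take $\phi=\theta=\epsilon$ and get $\epsilon\,\epsilon\,\epsilon=-\epsilon$, yielding $\epsilon_{\alpha_i\alpha_k}\gamma=0$. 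In every case $\omega$ is a scalar multiple of the claimed target tensor, and since \Cref{lem:bravyi} gives $\omega_{\alpha_i\alpha_k}\gamma_{\alpha_1,\dots,\alpha_n}=0$, dividing out the (nonzero) scalar gives exactly the asserted new constraint.

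There is no genuinely hard step here; the content is entirely in \Cref{lem:bravyi}, and the corollary is just three $2\times 2$ multiplications. The only points requiring care are bookkeeping ones: (i) correctly lining up index positions so that the shared index is the one contracted through $\epsilon$ as in the lemma, where I would note that $\psi$ is symmetric and $\epsilon$ is antisymmetric, so reordering indices changes the constraint tensor by at most an overall sign and hence does not affect whether it annihilates $\gamma$; and (ii) checking that the scalar prefactor is nonzero, which is automatic for the $-1$ factors in \Cref{eq:const_plus_eps,eq:eps_plus_eps} and is exactly where the hypothesis $\alpha\neq 0\neq\beta$ is used in \Cref{eq:const_plus_const} to guarantee $\alpha\beta\neq 0$.
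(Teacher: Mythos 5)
Your proposal is correct and is essentially identical to the paper's own proof: both apply \Cref{lem:bravyi} with the shared index as the middle index and reduce each case to a $2\times 2$ matrix product, verifying $\psi\epsilon\psi=\alpha\beta\,\epsilon$, $\epsilon\epsilon\psi=-\psi$, and $\epsilon\epsilon\epsilon=-\epsilon$ (the paper states these only up to a nonzero proportionality constant, while you give the constants explicitly). Your added remarks on index ordering and on where $\alpha\neq 0\neq\beta$ is used are sound and only make the same argument slightly more explicit.
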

\begin{proof}
    The following calculations may be easily verified:
    \begin{align}
        \psi_{\delta \zeta}\epsilon_{\zeta \eta}\psi_{\eta, \mu} \propto \epsilon_{\delta \mu},\\
        \epsilon_{\delta \zeta}\epsilon_{\zeta \eta}\psi_{\eta \mu}\propto \psi_{\delta \mu},\\
        \epsilon_{\delta\zeta}\epsilon_{\zeta \eta}\epsilon_{\eta \mu }\propto \epsilon_{\delta \mu},
    \end{align}
    with nonzero constant of proportionality.  \Cref{lem:bravyi} then implies each of the items.  
\end{proof}

The above corollary shows that if e.g. an optimal state $\ket{\gamma}$ for $EVC$ satisfies the constraint $\psi$ along the edge $(1, 2)$ and along the edge $(2, 3)$ then we may assume that it also satisfies the constraint $\epsilon$ along the edge $(1, 3)$.  We will say that $i$ and $j$ are connected by a $\psi$ edge if $\psi_{\alpha_i \alpha_j}\gamma_{\alpha_1, ..., \alpha_n}$ and we will say $i$ and $j$ are connected by an $\epsilon$ edge if $\epsilon_{\alpha_i, \alpha_j}\gamma_{\alpha_1, ..., \alpha_n}$.  Obviously for all $ij\in E$ in the original graph $i$ and $j$ are connected by a $\psi$ edge, and we may combine $\psi$ constraints on overlapping edges to derive new constraints using \Cref{cor:bravyi} as in \cite{de2010ground}.
The proof will depend on the constraint state $\ket{\psi}$ as well as the bipartiteness of $G$.  The form of $\ket{\psi}$ is restricted by \Cref{prop:const_class} so there are only a few cases to resolve.

\noindent \textbf{Case 1: Bipartite, $\ket{\psi}=\alpha\ket{00} +\beta \ket{11}$}

Assume the graph is connected and bipartite.  Let $(A, B)$ be the partition of the vertices such that $ij\in E$ implies exactly one of $\{i, j\}$ is in $A$ and exactly one is in $B$.  
Let $\ket{\gamma}$ be the optimal solution.  First we will demonstrate that for any two distinct points $i, j\in A$ we can assume that $\epsilon_{\alpha_i \alpha_j} \gamma_{\alpha_1, \alpha_2, ..., \alpha_n}=0$.  Since the graph is connected there is a path in $G$ $(i, k_1, k_2, ..., k_{2p-1}, j) $ such that vertices $k_\ell$ with $\ell$ odd are in $B$ and $k_\ell$ with $\ell$ even are in $A$. By \cref{cor:bravyi}, \cref{eq:const_plus_const} we can assume all the points in the path of distance $2$ which and in $A$ are connected with and $\epsilon$ edge: $\psi_{k_{ 2i} k_{2i+1}}\gamma_{\alpha_1, ..., \alpha_n}=0$ and $\psi_{k_{ 2i+1} k_{2i+2}}\gamma_{\alpha_1, ..., \alpha_n}=0$ implies $\epsilon_{k_{ 2i} k_{2i+2}}\gamma_{\alpha_1, ..., \alpha_n}=0$.  Then by \Cref{cor:bravyi}, \cref{eq:eps_plus_eps} we may use the newly derived constraints to connect all the points of distance $4$: $\epsilon_{k_{2i} k_{2i+2}}\gamma_{\alpha_1, ..., \alpha_n}=0$ and $\epsilon_{k_{2i+2} k_{2i+4}}\gamma_{\alpha_1, ..., \alpha_n}=0$ imply that $\epsilon_{k_{2i} k_{2i+4}}\gamma_{\alpha_1, ..., \alpha_n}=0$.  Similarly we may also connect points in $A$ of distance $6$ with an $\epsilon$ edge.  Proceeding in this way we may eventually connect $i$ and $j$ with an $\epsilon$ edge as well as any pair of points in $A$ and any pair of points in $B$.  Hence we will assume that all points in $A$ are connected with an $\epsilon $ edge as well as all points in $B$.  

The next observation is that we can assume any $i\in A$ any $j\in B$ are connected with a $\psi $ edge.  Fix $i\in A$ and $j\in B$.  $i$ must be connected to some other point $k\in B$ since the graph is connected.  $k$ must be connected to $j$ with an $\epsilon$ edge.  \Cref{cor:bravyi} \Cref{eq:const_plus_eps} then implies that $i$ and $j$ are connected by a $\psi$ edge.  Now we can use the tensor constraints we have described to fix the structure of a feasible state.  Let us write $\ket{\gamma}$ in the computational basis by first listing the indices in $A$ then listing the indices in $B$
\begin{equation}
\ket{\gamma}=\sum_{\substack{\mathbf{x}\in \mathbb{F}_2^{|A|}\\\mathbf{y}\in \mathbb{F}_2^{|B|}}}\gamma_{\mathbf{x}, \mathbf{y}} \ket{\mathbf{x}, \mathbf{y}}
\end{equation}
Observe that the epsilon edges force:
\begin{align*}
\epsilon_{\alpha_1, \alpha_2} \gamma_{\alpha_1, \alpha_2, \alpha_3 ...}=0 \,\,\,\,\,\,\,\,\forall \alpha_3, ..., \alpha_n  \Leftrightarrow  \gamma_{0, 1, \alpha_3 ...}-\gamma_{1, 0, \alpha_3 ...}=0 \,\,\,\,\,\,\,\,\forall \alpha_3, ..., \alpha_n\\
\Leftrightarrow \gamma_{0, 1, \alpha_3 ...}=\gamma_{1, 0, \alpha_3 ...} \,\,\,\,\,\,\,\,\forall \alpha_3, ..., \alpha_n.
\end{align*}
Hence we may permute the entries of $\mathbf{x}$ and $\mathbf{y}$ separately without changing the amplitude.  Let us rewrite $\ket{\gamma}$ as 
\begin{equation}\label{eq:hamm_decomp}
    \ket{\gamma}=\sum_{a=0}^{|A|} \sum_{b=0}^{|B|} \gamma_{a, b} \ket{a, b},
\end{equation}
where 
\begin{equation}
    \ket{a, b}=\frac{1}{\sqrt{\binom{|A|}{a} \binom{|B|}{b}} }\sum_{\substack{\mathbf{x}:|\mathbf{x}|=a\\\mathbf{y}: |\mathbf{y}|=b}} \ket{\mathbf{x}, \mathbf{y}}.
\end{equation}

We can also relate different entries of $\gamma$ using $\psi$ edges.  Let $\gamma$ satisfy $\psi_{ij}$ and let $\mathbf{x}$ and $\mathbf{y}$ be vectors such that $\mathbf{x}_i=\mathbf{y}_j=0$.  Let $\mathbf{x}'$ and $\mathbf{y}'$ be vectors which are identical to $\mathbf{x}$ and $\mathbf{y}$ respectively, except with $\mathbf{x}_i'=1=\mathbf{y}_j'$.  Then, 
\begin{equation} 
\alpha \gamma_{\mathbf{x}, \mathbf{y}}+\beta \gamma_{\mathbf{x}', \mathbf{y}'}=0 \Leftrightarrow \gamma_{\mathbf{x}', \mathbf{y}'}=\frac{-\alpha}{\beta} \gamma_{\mathbf{x}, \mathbf{y}}.
\end{equation}
It follows that 
$$
\gamma_{a, b}=\gamma_{a-1, b-1} \frac{-\alpha}{\beta} \text{ if $a-1\geq 0$ and $b-1 \geq 0$.}
$$
Hence we can further rewrite $\ket{\gamma}$ as 
\begin{equation}\label{eq:final_decomp}
    \ket{\gamma}=\gamma_0 \ket{0}+\sum_{a=1}^{|A|} \gamma_a \ket{a} +\sum_{b=1}^{|B|}\gamma_b \ket{b},
\end{equation}
where $\ket{a}, \ket{b}, \ket{0}$ are orthonormal vectors satisfying:
\begin{equation}
    \ket{a} \propto \sum_{j=0}^{\min(|A|-a, |B|)} \left( \frac{-\alpha}{\beta}\right)^j \ket{a+j, j},
\end{equation}
\begin{equation}
    \ket{b} \propto \sum_{j=0}^{\min(|B|-j, |A|)} \left( \frac{-\alpha}{\beta}\right)^j \ket{j, b+j},
\end{equation}
and 
\begin{equation}
    \ket{0}\propto\sum_{j=0}^{\min(|A|, |B|)} \left( \frac{-\alpha}{\beta}\right)^j \ket{j, j}.
\end{equation}
Note that the normalization constants are easily calculated since $\ket{a}, \ket{b}$ are themselves polynomially large sums of orthonormal states.

We have demonstrated that the constraints force $\gamma$ to lie in a subspace of dimension equal to the total number of qubits.  Our next step is to show that for such states $\ket{\gamma}$ that we can calculate the objective of $EVC$ and then that we can optimize to find the optimal state.  It will be convenient to ``revert'' to the older decomposition of $\ket{\gamma}$ in \Cref{eq:hamm_decomp} since it will have applications for the other proof cases.  Note that an algorithm which computes the objective in the decomposition of \Cref{eq:hamm_decomp} can also be used to compute the objective in the decomposition in \Cref{eq:final_decomp}.  Let us assume $i\in A$ and compute the matrix elements of $\ket{0}\bra{0}_i$ in this decomposition. 

\begin{align}
    \bra{a, b} \ket{0} \bra{0}_i \ket{a', b'}=\frac{1}{\sqrt{\binom{|A|}{a}\binom{|B|}{b}\binom{|A|}{a'}\binom{|B|}{b'}}}\sum_{\mathbf{x}, \mathbf{y}, \mathbf{x}', \mathbf{y}'} \bra{\mathbf{x}, \mathbf{y}} \ket{0} \bra{0}_i \ket{\mathbf{x}', \mathbf{y}'},
\end{align}
where the sum is over $(\mathbf{x}, \mathbf{y}, \mathbf{x}', \mathbf{y}')$ satisfying $|\mathbf{x}|=a$, $|\mathbf{y}|=b$, $|\mathbf{x}'|=a'$ and $|\mathbf{y}'|=b'$.  Note that it must be $a=a'$ and $b=b'$ for the overall sum to be nonzero.  Also observe that for a nonzero value inside the sum we must have $\mathbf{y}=\mathbf{y}'$, $\mathbf{x}_j=\mathbf{x}'_j$ for $j\neq i$ and $\mathbf{x}_i =\mathbf{x}_i'=0$.  Hence we can easily calculate the value of the sum to be 
\begin{align}\label{eq:body_00_calc}
    =\delta_{a, a'}\delta_{b, b'} \frac{1}{\binom{|A|}{a}\binom{|B|}{b}} \cdot \binom{|B|}{b} \binom{|A|-1}{a}=\delta_{a, a'}\delta_{b, b'} \frac{|A|-a}{|A|} .
\end{align}
Computation of the values of $\{\ket{0}\bra{1}_i, \ket{1}\bra{0}_i, \ket{1}\bra{1}_i\}$ are similar and hence are relegated to the appendix.  Since we can write each $\phi_i$ as a sum of these operators, we can compute the matrix elements of $\sum_i \phi_i$ in the decompsition \Cref{eq:hamm_decomp} and hence in the decomposition \Cref{eq:final_decomp}.  

Let us define a Hermitian matrix $M$ with rows indexed by the union $A\cup B$ defined as:
\begin{align}
    M_{a, a'} =\bra{a} \sum_i \phi_i \ket{a'} \text{ if $a,a'\in A$},\\
    M_{a, b} =\bra{a} \sum_i \phi_i \ket{b} \text{ if $a\in A, b\in B$},\\
    M_{b, b'} =\bra{b} \sum_i \phi_i \ket{b'} \text{ if $b,b'\in B$}.
\end{align}
With these calculations in hand, EVC reduces to computing the smallest eigenvalue of $M$:
\begin{align}
    \min \bra{\gamma} \sum_i \phi_i \ket{\gamma}\\
    =\min \sum_{a, a'\in A} \gamma_a^* \gamma_a  \bra{a} \sum_i \phi_i \ket{a'}+ \sum_{a\in A, b\in B} \gamma_a^* \gamma_b\bra{a} \sum_i \phi_i \ket{b}\\
    +\sum_{a\in A, b\in B} \gamma_a \gamma_b^*\bra{b} \sum_i \phi_i \ket{a} +\sum_{b, b'\in B}\gamma_b^* \gamma_{b'} \bra{b} \sum_i \phi_i \ket{b'}\\
    =\lambda_{min} (M).
\end{align}

\noindent \textbf{Case 2: $\ket{\psi}=\ket{01}-\ket{10}$}

If the graph is bipartite then we can repeat the proof for the $\ket{\psi}=\alpha \ket{00} +\beta\ket{11}$ while replacing $\psi$ with $\epsilon$ to see that every pair of vertices is connected with an $\epsilon $ edge.  Hence, just as in the proof for that case, if we write $\ket{\gamma}=\sum_{\mathbf{x}\in \mathbb{F}_2^{n}}\gamma_{\mathbf{x}} \ket{\mathbf{x}}$ then we can infer $\gamma_{\mathbf{x}}=\gamma_{\mathbf{x}'}$ for any pair with $|\mathbf{x}|=|\mathbf{x}'|$.  So $\ket{\gamma}=\sum_{a=0}^n\gamma_a \ket{a}$ where 
\begin{equation}\label{eq:non_bi_hamm}
    \ket{a}\propto \sum_{\mathbf{x}: |\mathbf{x}|=a} \ket{\mathbf{x}}.
\end{equation}  
We can compute the matrix elements $\bra{a} \sum_i \phi_i \ket{a}$ for this case in the same way (see the appendix) so once again EVC reduces to an eigenvalue problem.  If the graph is not bipartite then we can find a bipartite subgraph and use it to connect every pair of vertices with an $\epsilon$ edge so this case reduces to the bipartite case.  

\noindent \textbf{Case 3: Non-bipartite, $\ket{\psi}=\alpha\ket{00}+\beta \ket{11}$}

We will apply the same kind of analysis for this case to show that the case where there is an odd cycle in the graph is even more restrictive and that it can be solved by finding the smallest eigenvalue of a $2\times 2 $ matrix.  Find a bipartite subgraph of $G$ with parititon $(A, B)$ and repeat the argument for the bipartite case.  We can assume then that every pair of vertices in $A$ is connected with an $\epsilon$ edge,every pair in $B$ is connected with an $\epsilon $ edge and every pair $i\in A, j\in B$ is connected with a $\psi$ edge.  Since the original graph was not bipartite there must be a $\psi$ edge joining a pair in $A$ or $B$.  Say WLOG (by possibly renaming the partitions) that there is a $\psi$ edge between vertices $i, j\in A$.  Since all vertices in $A$ are connected with $\epsilon $ edges then using \Cref{cor:bravyi} \Cref{eq:const_plus_eps} to connect $i$ and $j$ with any other point in $A$ with a $\psi$ edge.  So for any other pair of vertices in $A$, say $k$ and $\ell$ we may assume $i$ is connected to $\ell$ with a $\psi$ edge and $i$ is connected to $k$ with an $\epsilon$.  It follows using \Cref{eq:const_plus_eps} that $k$ and $\ell$ must also be connected with a $\psi$ edge.  Next we establish that any pair of points $i\in A$ and $j\in B$ are connected with an $\epsilon $ edge.  Note that $j$ must be connected to some $k\in A, \neq i$ vertex with a $\psi$ edge so we may use \Cref{eq:const_plus_const} to connect $i$ and $j$ with an $\epsilon$ edge.  Finally we will note that we can connect all vertices in $B$ with a $\psi$ edge.  Indeed, for any pair $i, j\in B$ for any point $k\in A$ we know that $i$ is connected with $k$ using an $\epsilon$ edge and $j$ is connected to $k$ using a $\psi$ edge.  \Cref{eq:const_plus_eps} implies that $i$ and $j$ may be connected with a $\psi$ edge.

We have demonstrated for this case that we may assume any two points in the graph are connected with a $\psi$ edge and an $\epsilon$ edge.  So, just as the previous cases we can assume $\gamma_{\mathbf{x}}=\gamma_{\mathbf{x}'}$ whenever $|\mathbf{x}|=|\mathbf{x}'|$.  $\ket{\gamma}=\sum_{a=0}^{n}\gamma_a \ket{a}$ where the $\psi$ constraints imply $\gamma_a=\gamma_{a-2}(-\alpha/\beta)$.  Since the ``even'' and ``odd'' amplitudes are related we may write:
\begin{equation}
\ket{\gamma}=\gamma_{e} \ket{e}+\gamma_o \ket{o},    
\end{equation}
where
\begin{equation}
    \ket{e}\propto \sum_{a \,\, even, \,\,\leq n+1} \left(\frac{-\alpha}{\beta}\right)^{a/2} \ket{a}
\end{equation}
and
\begin{equation}
    \ket{o}\propto \sum_{a \,\, odd, \,\,\leq n+1} \left(\frac{-\alpha}{\beta}\right)^{(a-1)/2} \ket{a}.
\end{equation}

By case $2$ we can compute $\bra{a} \phi_i \ket{b}$ where $\ket{a}$ and $\ket{b}$ are defined in \Cref{eq:non_bi_hamm} so we can compute $\bra{e} \phi_i \ket{o}$, $\bra{e} \phi_i \ket{e}$, etc. and the problem once again reduces to an eigenvalue problem.



\section*{Acknowledgements}
    C.R. would like to thank Jun Takahashi and Andrew Zhao for insightful discussions during this work. Sandia National Laboratories is a multimission laboratory managed and operated by National Technology and Engineering Solutions of Sandia, LLC., a wholly
owned subsidiary of Honeywell International, Inc., for the U.S. Department of Energy’s National
Nuclear Security Administration under contract DE-NA-0003525. This work was supported by
the U.S. Department of Energy, Office of Science, Office of Advanced Scientific Computing Research, Accelerated Research in Quantum Computing. O.P. was also supported by U.S. Department of Energy, Office of Science, National Quantum Information Science Research Centers.

\printbibliography
\begin{appendices}

    \section{Additional calculations for Proof of \Cref{thm:cvqc}} \label{prf:EVC_2}

In this section we will use the notation $\delta\bigg[ P\bigg]$ as the delta function which evaluates to $1$ if the predicate $P$ is true and $0$ otherwise.  

    \noindent \textbf{Bipartite Case}
    
    Suppose we have $n$ qubits and let $(A, B)$ be a partition of the qubits.  For all $a\leq |A|$  and $b\leq |B|$ define the following states
    \begin{align}
        \ket{a, b}=\frac{1}{\sqrt{\binom{|A|}{a}\binom{|B|}{b}}} \sum_{\substack{\mathbf{x}\in \mathbb{F}_2^{|A|} |\mathbf{x}|=a\\\mathbf{y}\in \mathbb{F}_2^{|B|}: |\mathbf{y}|=b}} \ket{\mathbf{x}, \mathbf{y}},
    \end{align}
    where $\ket{\mathbf{x}, \mathbf{y}}$ is a computational basis state in which the qubits the state of the qubits from $A$ is given in $\mathbf{x}$ and the state of the qubits in $B$ is given in $\mathbf{y}$.  For all $i \in A$ and $j\in B$ we can derive the following:
    \begin{align}
    \bra{a, b} \ket{0} \bra{1}_i \ket{a', b'}=\frac{1}{\sqrt{\binom{|A|}{a}\binom{|B|}{b}\binom{|A|}{a'}\binom{|B|}{b'}}}\sum_{\substack{|\mathbf{x}|=a\\|\mathbf{x}'|=a' \\ |\mathbf{y}|=b \\ |\mathbf{y}'|=b'}} \bra{\mathbf{x}, \mathbf{y}} \ket{0} \bra{1}_i \ket{\mathbf{x}', \mathbf{y}'}\\
    \nonumber = \frac{1}{\sqrt{\binom{|A|}{a}\binom{|B|}{b}\binom{|A|}{a'}\binom{|B|}{b'}}}\sum_{\substack{|\mathbf{x}|=a\\|\mathbf{x}'|=a' \\ |\mathbf{y}|=b \\ |\mathbf{y}'|=b'}} \delta\bigg[\mathbf{y}=\mathbf{y}'\bigg]\,\, \delta\bigg[\mathbf{x}_j=\mathbf{x}_j' \,\,\forall \,\,j\neq i, \mathbf{x}_i=0, \mathbf{x}_i'=1\bigg]\\
    \nonumber =\delta\bigg[ a+1=a', b=b'\bigg] \frac{\binom{|A|-1}{a} \binom{|B|}{b}}{\sqrt{\binom{|A|}{a}\binom{|A|}{a+1}}\binom{|B|}{b}} =\delta\bigg[ a+1=a', b=b'\bigg]  \frac{\sqrt{(a+1)(|A|-a)}}{|A|}.
    \end{align}
    \begin{align}
        \bra{a, b} \ket{1} \bra{1}_i \ket{a', b'}= \bra{a, b} (\mathbb{I}-\ket{0}\bra{0}_i) \ket{a', b'} =\delta\bigg[a=a', b=b' \bigg]\left( 1-\frac{|A|-a}{|A|}\right), 
    \end{align}
    where the last equality follows from \Cref{eq:body_00_calc}.  
    \begin{align}
        \bra{a', b'} \ket{1}\bra{0}_i \ket{a, b}=(\bra{a, b} \ket{0} \bra{1}_i \ket{a', b'})^\dagger.
    \end{align}
    Computing the analogous quantities of $j$ can be done by switching the role of $A$ and $B$.  The proof depends only on the values being explicit so we skip these calculations here.  

    \noindent \textbf{Non-bipartite Case}

    Suppose again that there are $n$ qubits and for each $a\in [n]$ define:
    \begin{equation}
        \ket{a}=\frac{1}{\sqrt{\binom{n}{a}}} \sum_{\mathbf{x}\in \mathbb{F}_2^n:|\mathbf{x}|=a} \ket{\mathbf{x}}.
    \end{equation}
    For all $a, b\in [n]$ we may calculate the following quantities.  
    \begin{align}
        \bra{a} \ket{0}\bra{0}_i \ket{b}=\frac{1}{\sqrt{\binom{n}{a}\binom{n}{b}}}\sum_{|\mathbf{x}|=a, |\mathbf{y}|=b} \bra{\mathbf{x}} \ket{0} \bra{0}_i \ket{\mathbf{y}}=\frac{\delta\bigg[ a=b \bigg]}{\binom{n}{a}} \sum_{|\mathbf{x}|, |\mathbf{y}|=a} \delta\bigg[ \mathbf{x}_i=\mathbf{y}_i=0, \mathbf{x}_j=\mathbf{y}_j \,\, \forall \,\, j\neq i  \bigg]\\
        \nonumber =\delta\bigg[ a=b\bigg] \frac{\binom{n-1}{a}}{\binom{n}{a}}=\frac{n-a}{n}.
    \end{align}
    \begin{align}
        \bra{a} \ket{1}\bra{1}_i \ket{b} =\bra{a} (\mathbb{I}-\ket{0}\bra{0}_i) \ket{b}=\delta\bigg[a=b\bigg] \left(1-\frac{n-a}{n} \right)
    \end{align}
    \begin{align}
        \bra{a} \ket{0}\bra{1}_i \ket{b} =\frac{1}{\sqrt{\binom{n}{a}\binom{n}{b}}}\sum_{|\mathbf{x}|=a, |\mathbf{y}|=b} \bra{\mathbf{x}} \ket{0} \bra{1}_i \ket{\mathbf{y}}=\frac{\delta\bigg[ a+1=b\bigg]}{\sqrt{\binom{n}{a} \binom{n}{a+1}}} \sum_{|\mathbf{x}|, |\mathbf{y}|=a} \delta\bigg[ \mathbf{x}_i=0, \mathbf{y}_i=1, \mathbf{x}_j=\mathbf{y}_j \,\, \forall \,\, j \neq i\bigg] \\
        \nonumber = \delta\bigg[ a+1=ab\bigg]  \frac{\sqrt{(a+1)(n-a)}}{n}.
    \end{align}
    \begin{align}
         \bra{b} \ket{0}\bra{1}_i \ket{a} =  (\bra{a} \ket{0}\bra{1}_i \ket{b})^\dagger
    \end{align}

\section{Review of perturbative Bloch expansion}
\label{appx:bloch_expansion}

Let us review some of the key ideas of perturbative Bloch expansion that are necessary for us. For more details, we redirect the reader to \cite{jordan2008perturbative} and the references within.

Let $H_0$ be an unperturbed Hamiltonian whose ground state energy is zero, and $P_0$ be the projector on to ground subspace $H_0$ with dimension $d$. Let $\lambda V$ be the perturbation to the Hamiltonian $H_0$, making the total Hamiltonian $H = H_0 + \lambda V$. Let $\ket{\psi_1}.....\ket{\psi_d}$ be the low energy perturbed eigenvectors of $H$ with eigenvalues $E_1,....,E_d$. Let $\ket{\alpha_i} = P_0 \ket{\psi_i}$ and for sufficiently small $\lambda$, the set $\left\{\ket{\alpha_i}\right\}$ are linearly independent and span the ground subspace on $H_0$. Let us define a linear operator $\mathcal{U}$ such that
\begin{align}
    \mathcal{U} \ket{\alpha_i} = \ket{\psi_i}\quad \forall \, i = 1,2....d
\end{align}
and
\begin{align}
    \mathcal{U}\ket{\phi} = 0 \quad \forall \, \ket{\phi} \, \text{ such that }\,  P_0\ket{\phi} = 0.
\end{align}
Similarly, let us define $\mathcal{U}^{-1}$ such that
\begin{align}
    \mathcal{U}^{-1} \ket{\psi_i} = \ket{\alpha_i}\quad \forall \, i = 1,2....d
\end{align}
and
\begin{align}
    \mathcal{U}^{-1}\ket{\phi} = 0 \quad \forall \, \ket{\phi} \, \text{ such that }\,  P_0\ket{\phi} = 0.
\end{align}
The Bloch expansion is a perturbative series expansion of $\mathcal{U}$ of the following form
\begin{align} \label{eq:U_expansion}
    \mathcal{U} = P_0 + \sum_{m=1}^{\infty} \mathcal{U}^{(m)}
\end{align}
where
\begin{align}
    \mathcal{U^{m}} = \lambda^m \sum_{(m)} S^{l_1}VS^{l_2}...VS^{l_m}VP_0
\end{align}
and the summation is over non-negative tuples $(l_1, l_2,...,l_m)$ such that
\begin{align}
    l_1+l_2+...+l_m &= m \\
    l_1+l_2+...+l_p &\geq p \quad \forall \, p=1,2,...,m-1
\end{align}
and
\begin{align}
    S^{l} = \begin{cases}
        \ \  \frac{I-P_0}{(-H_0)^l} & \text{if}\quad l > 0\\
        \ \  -P_0 &\text{if}\quad l = 0.
    \end{cases}
\end{align}
With the assumption that the ground state energy of $H_0$ is zero, we get $\frac{I-P_0}{H_0} = \frac{1}{H_0}$ which implies
\begin{align}
    S^{l} = \begin{cases}
        \ \  (-H_0)^{-l} & \text{if}\quad l > 0\\
        \ \  -P_0 &\text{if}\quad l = 0.
    \end{cases}
\end{align}

Let $\mathcal{A} = \lambda P_0 V \mathcal{U}$. Note that $\ket{\alpha_1},....\ket{\alpha_d}$ are the right eigenvectors of $\mathcal{A}$ with eigenvalues $E_1,...,E_d$. The effective Hamiltonian $H_{\text{eff}} = \mathcal{U}\mathcal{A}\mathcal{U}^{-1}$, and for our purposes, it is sufficient to approximate $H_{\text{eff}}$ with $\mathcal{A}$. Using \Cref{eq:U_expansion}, we can derive the perturbative expansion for $\mathcal{A}$ as 
\begin{align}
    \mathcal{A} = \sum_{m=1}^{\infty} \mathcal{A}^{(m)}
\end{align}
where
\begin{align}
    \mathcal{A}^{(m)} = \lambda^m \sum_{(m-1)} P_0VS^{l_1}VS^{l_2}....VS^{l_{m-1}}VP_0
\end{align}
and the summation is over non-negative tuples $(l_1, l_2,...,l_{m-1})$ such that
\begin{align}
    l_1+l_2+...+l_{m-1} &= m-1 \\
    l_1+l_2+...+l_p &\geq p \quad \forall \, p=1,2,...,m-2
\end{align}
\end{appendices}
\end{document}